\newtheorem{theorem}{Theorem}
\newtheorem{proposition}[theorem]{Proposition}
\newtheorem{corollary}[theorem]{Corollary}
\newcommand{\be}{\begin{equation}}
\newcommand{\ee}{\end{equation}}
\newcommand{\bea}{\begin{eqnarray}}
\newcommand{\eea}{\end{eqnarray}}
\newcommand{\ba}{\begin{array}}
\newcommand{\ea}{\end{array}}
\newcommand{\bean}{\begin{eqnarray*}}
\newcommand{\eean}{\end{eqnarray*}}
\newcommand{\om}{\omega}
\newcommand{\sig}{\sigma}
\newcommand{\pa}{\partial}
\newcommand{\res}{{\rm res\/}}
\newcommand{\wh}{\widehat}
\begin{document}

\title
 {\sc A note on the extended dToda hierarchy\/}
\author
{\sc Niann-Chern Lee$^1$ and Ming-Hsien Tu$^2$
\footnote{phymhtu@ccu.edu.tw} \\
  $^1$
  {\it General Education Center, National Chin-Yi University of Technology\/},\\
  {\it Taichung 411, Taiwan\/}\\
   $^2$
  {\it Department of Physics, National Chung Cheng University\/},\\
   {\it Chiayi 621, Taiwan\/}\/}
\date{\today}
\maketitle
\begin{abstract}
We give a derivation of dispersionless Hirota equations for the
extended dispersionless Toda hierarchy. We show that the dispersionless Hirota
equations are nothing but a direct consequence of the genus-zero topological recursion
relation for the topological $CP^1$ model. Using the dispersionless Hirota equations we compute
the two point functions and express the result in terms of Catalan number.
\end{abstract}
Keywords:  extended dToda hierarchy, dispersionless Hirota equation, Catalan number, topological field theory.

\newpage

\section{Introduction}
Recently, Kodama and Pierce\cite{KP09} gave a combinatorial description of  the
 one-dimensional dispersionless Toda(dToda) hierarchy to solve the two-vertex problem on a sphere.
 The main strategy is to characterize  the free energy $F(t_0,t)$ ($t=(t_1,t_2,\cdots)$) of the dToda hierarchy by the
 corresponding dispersionless Hirota equations.
Then the second derivatives of the free energy $\pa_{t_n}\pa_{t_m}F\equiv F_{n,m}$ satisfy a set of algebraic relations.
Surprisingly they found a closed form for the rational numbers $F_{n,m}$ under the conditions $F_{01}=F_{00}=0$
for general $n$ and $m$. In particular, the formulas of $F_{n,m}$ can be expressed in terms of the Catalan number
which is commonly used in the context of enumerative combinatorics (see e.g. \cite{Stanley00}).
Their result for $F_{n,m}$  provides a combinatorial meaning of a counting problem of connected ribbon graphs
with two vertices of degree $n$ and $m$ on a sphere
and is a generalization of the previous works where the problem has been solved only
in the case of the same degree (that is $F_{nn}$) \cite{J93,KL93}.

In this work, motivated by the aforementioned result, we like to generalize the computation of the two point functions
$F_{n,m}$ to the extended dToda hierarchy\cite{EY94,EHY95,DZ01,DZ04}
which is an extension of the  one-dimensional dToda hierarchy by adding logarithmic type conserved densities.
Since extended dToda hierarchy is the dispersionless limit
of the extended Toda hierarchy \cite{Z02,CDZ04} which has been used to govern the
Gromov-Witten(GW) invariants(see e.g. \cite{HKKPTVVZ03} and references therein) for the $CP^1$ manifold.
Thus the extended dToda hierarchy becomes the master equation of the genus zero GW invariants whose generating
function is characterized by the free energy of the extended dToda hierarchy.
Based on the twistor theoretical method \cite{TT95,KO95} the extended dToda hierarchy can be constructed
by adding logarithmic-flow to the one-dimensional dToda hierarchy.
The corresponding Orlov-Schulman operator is conjugated with the Lax operator
under the Poisson bracket which imposes an extra condition (the so-called string equation)
 on the free energy of the extended dToda hierarchy.
 We will show that the full hierarchy flows  can be expressed in terms of second derivatives of its
 associated free energy $F$ and thus  can be viewed as the corresponding dispersionless Hirota(dHirota) equations.
We then investigate the two point functions of the extended dToda hierarchy
based on the associated dHirota equations and express the result in terms of the Catalan number.
To make a connection with the topological field theory, we rewrite the dHirota equations in $CP^1$ time parameters
and show that they are indeed a direct consequence of the genu-zero topological recursion relation\cite{W90}
of the topological $CP^1$ model.

This paper is organized as follows. In section 2, we recall the Lax formalism of
 the extended dToda hierarchy. In section 3, we derive the dHirota equation of the extended dToda hierarchy
which can be expressed as a set of equations in terms of second derivatives of the free energy.
The initial values of two-point functions of the extended dToda hierarchy are computed in Section 4.
In section 5, we reinterpret  the dHirota equations from topological field theory point of view.
 Section 6 is devoted to the concluding remarks.

\section{The extended dispersionless Toda hierarchy}
The one-dimensional dToda hierarchy\cite{TT95,KP09} is defined by the Lax equation
\[
\frac{\pa L}{\pa t_n}=\{B_n, L\},\quad B_n=(L^n)_{\ge 0}.
\label{Lax-eq}
\]
where $L$ ia a two-variable Lax operator of the form
 \be
  L=p+u_1+u_2p^{-1}
  \label{Laxop-2}
 \ee
with $u_1$ and $u_2$ are functions of the time variables $t=(t_1,t_2,\ldots)$
along with a spatial variable $t_0$.
Here $(A)_{\geq 0}$ denotes the polynomial part of $A$,
$(A)_{\leq -1}=A-(A)_{\geq 0}$, and the Poisson bracket $\{,\}$ is defined by
\[
 \{A(p,t_0),B(p,t_0)\} = p\frac{\pa A(p,t_0)}{\pa p}\frac{\pa B(p,t_0)}{\pa t_0}
                       - p\frac{\pa A(p,t_0)}{\pa t_0}\frac{\pa B(p,t_0)}{\pa p}.
\]
In particular, the fundamental variable $u_1$ and $u_2$
can be expressed in terms of second derivatives of $F$ as
\[
u_1=F_{01},\quad u_2=F_{11}=e^{F_{00}}
\]
where the second equation is just the one-dimensional reduction of the dToda field equation.
Following the twistor theoretical construction \cite{TT95,KO95}, the extended dToda hierarchy can be
constructed from the one-dimensional dToda hierarchy by adding the  $\hat{t}_n$-flows as
\be
 \frac{\pa L}{\pa \hat{t}_n} = \{\hat{B}_n, L\},\quad \hat{B}_n=\left(L^n(\log L-d_n)\right)_{\geq 0}
 \label{eLaxeq}
\ee
where $d_n=\sum_{j=1}^n1/j$ with $d_0\equiv 0$ and $\log L$ is defined by the prescription
\be
\log L=\frac{1}{2}\log u_2+\frac{1}{2}\log(1+u_1p^{-1}+u_2p^{-2})+
\frac{1}{2}\log\left(1+\frac{u_1}{u_2}p+\frac{1}{u_2}p^2\right)
\label{logL-exp}
\ee
with the proviso that we shall Taylor expand the second term in
$p^{-1}$, whereas in $p$ for the last term. Moreover, the associated
Orlov-Schulman is given by
\[
 N(t_0,t,\hat{t})
= \sum_{n=1}nt_n L^n +t_0 +\sum_{n=1}n\hat{t}_n L^n(\log
L-d_{n-1})+\sum_{n=1}F_{n0}L^{-n},
 \]
 which satisfies
\[
 \pa_{t_n} N=\{B_n,N\}, \quad \pa_{\hat{t}_n}N=\{\hat{B}_n,N\},\quad \{L, N\}=L.
\]
The symplectic two-form of the extended dToda hierarchy can be written as
\[
 \om \equiv \frac{dp}{p}\wedge dt_0
     +\sum_{n=1}^{\infty}dB_n\wedge dt_n
     +\sum_{n=1}^{\infty}d\hat{B}_n\wedge d\hat{t}_n=\frac{d L\wedge dN}{ L}
\]
which implies the existence of a $S$ function such that
\[
dS(t_0,t,\hat{t})=Nd\log L+\log p dt_0 +\sum_{n=1}^{\infty}B_n dt_n
+\sum_{n=1}^{\infty}\hat{B}_n d\hat{t}_n
\]
or, equivalently,
 \[
 N=\frac{\pa S}{\pa \log L},\quad \log
p=\frac{\pa S}{\pa t_0},\quad B_n=\frac{\pa S}{\pa t_n},\quad
\hat{B}_n=\frac{\pa S}{\pa \hat{t}_n}.
\]
It is not hard to show that the $S$ function has the form
 \[
S=\sum_{n=1}t_n L^n+t_0\log L+\sum_{n=1}\hat{t}_nL^n(\log L-d_n)-
\sum_{n=1}\frac{F_{0n}}{n}L^{-n}.
 \]
 Setting $\hat{t}_n=0$ for $n\geq 1$, it recovers the $S$ function of the one-dimensional dToda hierarchy.
Finally, the twistor construction\cite{KO95} enables us to extract the string equation
\be
 -1= \sum_{n=2}^{\infty}nt_n\frac{\pa L}{\pa t_{n-1}} +
\sum_{n=1}^{\infty}n\hat{t}_n\frac{\pa L}{\pa \hat{t}_{n-1}}.
\label{string-eq}
 \ee
 for the extended dToda hierarchy without referring to the $CP^1$ matrix model\cite{EY94}.
\section{Dispersionless Hirota equations}
\begin{proposition} The following relations hold.
\bea
 F_{n0}&=&(B_n)_{[0]},\quad F_{n1}=\res(L^n),\quad n\geq 1
 \label{bF1}\\
F_{\wh{n}0}&=&(\wh{B}_n)_{[0]},\quad F_{\wh{n}1}=\res(L^n(\log
L-d_n)),\quad n\geq 0
\label{bF2}
 \eea
 where $(\sum_ka_kp^k)_{[j]}=a_j$.
\end{proposition}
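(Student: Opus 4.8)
The plan is to obtain the four families of identities by matching, coefficient by coefficient in powers of $p$, the two descriptions we have of the Hamiltonians $B_n$ and $\widehat B_n$: the defining one, $B_n=(L^n)_{\geq 0}$ and $\widehat B_n=(L^n(\log L-d_n))_{\geq 0}$, against the one coming from the $S$-function and the Orlov--Schulman operator, $B_n=\partial S/\partial t_n$, $\widehat B_n=\partial S/\partial\widehat t_n$, $\log p=\partial S/\partial t_0$, $N=\partial S/\partial\log L$. The free energy enters only through the tails: the $L^{-m}$-tails of $S$ and of $N$ are assembled from the second derivatives $F_{m0}$, so once the ``head'' parts of $B_n$ and $\widehat B_n$ are seen to reproduce $L^n$ and $L^n(\log L-d_n)$, what is left are relations tying $F_{m0}$ and $F_{m1}$ to the coefficients of $L^n$ and of $L^n(\log L-d_n)$. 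The conversions between the expansion in $p$ and the expansion in $L$ go through $L=p+u_1+u_2p^{-1}$ with $u_1=F_{01}$, $u_2=F_{11}=e^{F_{00}}$; note in particular that $\res(L^{-m})=\delta_{m,1}$, which is what keeps the bookkeeping of the negative tail finite.

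I would take the residue identities $F_{n1}=\res(L^n)$ and $F_{\widehat{n}1}=\res(L^n(\log L-d_n))$ first. Because $B_n$ and $\widehat B_n$ are projections onto non-negative powers of $p$, they carry no residue, so $\res(L^n)=\res\big((L^n)_{\leq -1}\big)$ and likewise in the hatted case; the residue thus sees only the negative tail, whose $L^{-1}$-coefficient is $F_{n1}$ (resp.\ $F_{\widehat{n}1}$), with the correct numerical factor by $\res(L^{-m})=\delta_{m,1}$. An equivalent route, useful as a check, is a short induction on $n$ based on the $t_1$-flow $\partial_{t_1}L=\{B_1,L\}$ with $B_1=p+u_1$: extract the $p^{-1}$-coefficient of $\partial_{t_1}L^n=\{B_1,L^n\}$ and use the mixed-partials symmetry $\partial_{t_1}\partial_{t_n}F=\partial_{t_n}\partial_{t_1}F$, the base case being $\res(L)=u_2=F_{11}$ (and, for the hatted chain, $\res(\log L)=\tfrac12 u_1=\tfrac12 F_{01}=F_{\widehat{0}1}$, read off from the prescription (\ref{logL-exp})). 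For $F_{n0}=(B_n)_{[0]}$ and $F_{\widehat{n}0}=(\widehat B_n)_{[0]}$ I would instead extract the $p^0$-coefficient and pass through $N$: its tail is $\sum_m F_{m0}L^{-m}$ and it obeys $\partial_{t_n}N=\{B_n,N\}$, $\partial_{\widehat t_n}N=\{\widehat B_n,N\}$, $\{L,N\}=L$, so the $p^0$-coefficient of $B_n$ (resp.\ $\widehat B_n$) gets matched to the appropriate $F_{m0}$; the base case for the hatted family, $F_{\widehat{0}0}=(\widehat B_0)_{[0]}=\tfrac12\log u_2=\tfrac12 F_{00}$, is immediate from (\ref{logL-exp}) and should be recognizable as a piece of the string equation (\ref{string-eq}).

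The one genuine obstacle is the logarithmic bookkeeping in the hatted identities. Unlike $L^n$, the quantity $L^n(\log L-d_n)$ is not a Laurent polynomial in $p$, and by (\ref{logL-exp}) the logarithm must be split into a piece Taylor-expanded around $p=\infty$ and a piece Taylor-expanded around $p=0$, with the projection $(\cdot)_{\geq 0}$ cutting across the logarithm. Organizing that split correctly --- and in parallel keeping exact track of the shift constants $d_n$, which are exactly what makes $\widehat B_n$ and $N$ mutually consistent (via the identity $d_n-1/n=d_{n-1}$ used already in Section~2) --- is where the effort goes; after that, the matching of coefficients in the bases $\{p^{-m}\}_{m\geq 1}$ and $\{L^{-m}\}_{m\geq 1}$ is routine.
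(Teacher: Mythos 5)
Your proposal is correct in outline and, for the residue identities, is essentially the paper's own argument: the paper likewise writes $B_n=\pa S/\pa t_n=L^n-\sum_{m\ge 1}\frac{F_{nm}}{m}L^{-m}$ and $\wh{B}_n=L^n(\log L-d_n)-\sum_{m\ge 1}\frac{F_{\wh{n}m}}{m}L^{-m}$, identifies $u_1=F_{01}$ and $u_2=e^{F_{00}}=F_{11}$ from $\log p=\pa S/\pa t_0$, and then extracts the $p^{-1}$ coefficient of the tails, which is precisely your $\res(L^{-m})=\delta_{m,1}$ bookkeeping (note that the prescription (\ref{logL-exp}) already turns $L^n(\log L-d_n)$ into an honest two-sided formal Laurent series in $p$, so the ``logarithmic obstacle'' you single out is harmless: $\wh{B}_n$ still has no $p^{-1}$ term and the same one-line residue extraction applies; your base-case checks $\res(\log L)=\tfrac12 u_1$ and $(\wh{B}_0)_{[0]}=\tfrac12\log u_2$ are consistent with $\hat t_0=2t_0$). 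Where you diverge is the pair of $[0]$-identities. The paper obtains $F_{n0}=(B_n)_{[0]}$ and $F_{\wh{n}0}=(\wh{B}_n)_{[0]}$ directly from the coefficient equations of the Lax flows for $u_1,u_2$: e.g. the $u_2$-component of (\ref{Lax-eq}) reads $\pa_{t_n}u_2=u_2\,\pa_{t_0}(B_n)_{[0]}$, which with $u_2=e^{F_{00}}$ gives $\pa_{t_0}\bigl(F_{n0}-(B_n)_{[0]}\bigr)=0$ in one line. You instead propose to read $F_{n0}$ off the $L^{-n}$-tail of the Orlov--Schulman operator $N$ and match it to $(B_n)_{[0]}$ via $\pa_{t_n}N=\{B_n,N\}$. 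That tail is indeed $\sum_{n\ge 1}F_{n0}L^{-n}$, so the route is viable, but as written it is only a gesture: the actual step connecting the tail coefficient of $N$ to the $p^0$-coefficient of $B_n$ is the whole content of that half of the proposition and still has to be carried out. I would either perform that $N$-computation explicitly or substitute the Lax-equation derivation, which is shorter; everything else in your plan matches the paper.
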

 \begin{proof}
From $\log p=\pa S/\pa t_0$ we have
 \be
  \log p=\log
L-\sum_{n=1}\frac{F_{0n}}{n}L^{-n}
 \label{HJ1}
 \ee
  or
\[
L=p e^{\sum_{n=1}F_{0n}L^{-n}}=p+F_{01}+
\left(-u_1F_{01}+\frac{1}{2}F_{02}+\frac{1}{2}(F_{01})^2\right)p^{-1}+O(p^{-2})
\]
which yields $u_1=F_{01}$ and
$u_2=\frac{1}{2}F_{02}-\frac{1}{2}(F_{01})^2$. Therefore, from the
$p^0$-term of the Lax equation (\ref{Lax-eq}) we have
$F_{n0}=(B_n)_{[0]}$.
On the other hand, from $B_n=\pa S/\pa t_n$ we have
 \[
B_n=L^n_+=L^n-\sum_{m=1}\frac{F_{nm}}{m}L^{-m}.
 \]
  For
$n=1$, we have $u_2p^{-1}=\sum_{m=1}F_{1m}L^{-m}/m$ which together with
(\ref{HJ1}) implies $ F_{m1}=mu_2P_{m-1}(F_{0j}/j)$  where $P_m(t)$ are Schur polynomials defined by
 $e^{\sum_{j=1}t_jz^j}=\sum_{j=0}P_j(t)z^j$.
 In particular, $u_2=e^{F_{00}}=F_{11}$. Also, for the $p^{-1}$-term of the Lax equation (\ref{Lax-eq}) we have
 $F_{n1}=\res(L^n)=u_2(B_n)_{[1]}$.
Furthermore, from $\hat{B}_n=\pa S/\pa \hat{t}_n$ we have
 \[
\hat{B}_n=[L^n(\log L-d_n)]_+=L^n(\log
L-d_n)-\sum_{m=1}\frac{F_{\hat{n}m}}{m}L^{-m}.
   \]
The $p^{-1}$-term gives $F_{\hat{n}1}=\res(L^n(\log L-d_n))=u_2(L^n(\log L-d_n))_{[1]}$
where the last equality is due to the the identity
$\mbox{res}(L^n(\log L-d_n)dL)=0$. Finally, from the $p^0$-term of
the Lax equation (\ref{eLaxeq}) we have $F_{\hat{n}0}=(\hat{B}_n)_{[0]}$.
\end{proof}
\begin{proposition} \label{Fn01}
The two point functions $F_{n0}$, $F_{n1}$, $F_{\hat{n}0}$, and
$F_{\hat{n}1}$ can be expressed in terms of $F_{01}$ and $F_{00}$
as follows
 \bea
 F_{n0}&=&\sum_{s=0}^{[\frac{n}{2}]}\frac{n!}{s!s!(n-2s)!}F_{01}^{n-2s}e^{sF_{00}}
 \label{Fn0}\\
 F_{n+1,1}&=&\sum_{s=0}^{[\frac{n}{2}]}\frac{(n+1)!}{s!(s+1)!(n-2s)!}F_{01}^{n-2s}e^{(s+1)F_{00}}
 \label{Fn1}\\
F_{\wh{n}0}&=&\frac{1}{2}\sum_{s=0}^{[\frac{n}{2}]}\frac{n!}{s!s!(n-2s)!}F_{01}^{n-2s}e^{sF_{00}}(F_{00}-2d_s)
\label{Fhn0}\\
F_{\wh{n+1},1}&=&\frac{1}{2}\sum_{s=0}^{[\frac{n}{2}]}\frac{(n+1)!}{s!(s+1)!(n-2s)!}F_{01}^{n-2s}
e^{(s+1)F_{00}}\left(F_{00}-2d_s-\frac{1}{s+1}\right).
 \label{Fhn1}
 \eea
\end{proposition}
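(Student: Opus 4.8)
The plan is to use the preceding Proposition to turn everything into the extraction of Laurent coefficients in $p$ of $L^n$ and $L^n\log L$, written in the two coordinates $u_1=F_{01}$, $u_2=F_{11}=e^{F_{00}}$. Indeed, (\ref{bF1})--(\ref{bF2}) give $F_{n0}=(L^n)_{[0]}$, $F_{n1}=(L^n)_{[-1]}$, $F_{\wh{n}0}=(L^n\log L)_{[0]}-d_n(L^n)_{[0]}$ and $F_{\wh{n}1}=(L^n\log L)_{[-1]}-d_n(L^n)_{[-1]}$, so it suffices to evaluate the four numbers $(L^n)_{[0]}$, $(L^n)_{[-1]}$, $(L^n\log L)_{[0]}$, $(L^n\log L)_{[-1]}$.

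First I would dispose of (\ref{Fn0}) and (\ref{Fn1}) by expanding $L^n=(p+u_1+u_2p^{-1})^n$ multinomially: the term $p^{a}u_1^{b}(u_2p^{-1})^{c}$ with $a+b+c=n$ lies in degree $p^{a-c}$ with coefficient $\frac{n!}{a!\,b!\,c!}u_1^{b}u_2^{c}$. Setting $a=c=s$ (so $b=n-2s$) gives $(L^n)_{[0]}=\sum_{s}\frac{n!}{s!\,s!\,(n-2s)!}u_1^{n-2s}u_2^{s}$, which becomes (\ref{Fn0}) once $u_2=e^{F_{00}}$; replacing $n$ by $n+1$ and imposing $a-c=-1$, i.e. $a=s$, $c=s+1$, $b=n-2s$, yields (\ref{Fn1}). (The equivalence $\res(L^n)=u_2(L^n)_{[1]}$ used in the previous proof is the reflection symmetry $L(u_2/p)=L(p)$, which gives $(L^n)_{[-k]}=u_2^{k}(L^n)_{[k]}$; this will be needed again below.)

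For the logarithmic identities I would first unpack (\ref{logL-exp}) as $\log L=\tfrac12\log u_2+\sum_{k\ge1}(a_kp^{-k}+b_kp^{k})$, where $a_k$ is the $p^{-k}$-coefficient of $\tfrac12\log(1+u_1p^{-1}+u_2p^{-2})$. Factoring $p^{2}+u_1p+u_2=(p-r_+)(p-r_-)$ gives $a_k=-\tfrac{1}{2k}h_k$ with $h_k=r_+^{k}+r_-^{k}$ the power sums (Newton: $h_1=-u_1$, $h_k=-u_1h_{k-1}-u_2h_{k-2}$), while the $p\mapsto u_2/p$ invariance of $\log L$ forces $b_k=u_2^{-k}a_k$. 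Multiplying by $L^n$ and using $(L^n)_{[-k]}=u_2^{k}(L^n)_{[k]}$, the $a$-part and the $b$-part of $\log L$ contribute equally, so $(L^n\log L)_{[0]}=\tfrac12F_{00}\,(L^n)_{[0]}-\sum_{k\ge1}\tfrac{h_k}{k}(L^n)_{[k]}$, with an analogous formula for $(L^n\log L)_{[-1]}$ differing only by shifts in the indices of the $(L^n)_{[\cdot]}$. Substituting into $F_{\wh{n}0}=(L^n\log L)_{[0]}-d_n(L^n)_{[0]}$ (and likewise for $F_{\wh{n+1},1}$), the identities (\ref{Fhn0})--(\ref{Fhn1}) reduce to the single combinatorial identity $\sum_{k\ge1}\tfrac{h_k}{k}(L^n)_{[k]}=\sum_s\frac{n!}{s!\,s!\,(n-2s)!}u_1^{n-2s}u_2^{s}\,(d_s-d_n)$ together with its shifted analogue.

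The hard part is this last evaluation, since it has to manufacture the harmonic numbers $d_s=\sum_{j=1}^s 1/j$. I would do it with the resolvent generating function: summing the multinomial expansion over $n$ gives $\sum_{n\ge0}L^nz^n=(1-zL)^{-1}=\frac{-p/z}{(p-\rho_+)(p-\rho_-)}$, where $\rho_\pm$ are the roots of $zp^{2}+(zu_1-1)p+zu_2$ (so $\rho_+\rho_-=u_2$, and $\rho_+\to\infty$ as $z\to0$); a partial-fraction split then gives $\sum_{n}(L^n)_{[k]}z^n=\frac{\rho_+^{-k}}{z(\rho_+-\rho_-)}$ for $k\ge0$. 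Using $\sum_{k\ge1}\tfrac{h_k}{k}x^{k}=-\log(1+u_1x+u_2x^{2})$ at $x=\rho_+^{-1}$ together with the defining relation $\rho_+^{2}+u_1\rho_++u_2=\rho_+/z$ collapses this to $\sum_{k\ge1}\tfrac{h_k}{k}\rho_+^{-k}=\log(z\rho_+)$, so the generating function of $\sum_{k}\tfrac{h_k}{k}(L^n)_{[k]}$ is the closed form $\frac{\log(z\rho_+)}{z(\rho_+-\rho_-)}$. Expanding $\rho_\pm$ explicitly (equivalently, recognizing $z\rho_-$ as a Catalan-type series and $z(\rho_+-\rho_-)=\sqrt{(1-u_1z)^{2}-4u_2z^{2}}$) and extracting $[z^n]$ — where the $d_s$ enter, for instance through $d_s=\int_0^1\frac{1-t^{s}}{1-t}\,dt$ or from the power series of the logarithm — recovers the combinatorial identity above; feeding it back, and using (\ref{Fn0}) and (\ref{Fn1}) for the subtracted $d_n$-terms, gives (\ref{Fhn0}) and (\ref{Fhn1}). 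The main bookkeeping hazards are the index shift in the $(\cdot)_{[-1]}$ case and correctly tracking whether a given monomial is weighted by $d_s$ or by $d_n$.
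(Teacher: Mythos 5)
Your treatment of (\ref{Fn0}) and (\ref{Fn1}) is exactly the paper's argument (multinomial expansion of $L^n$, plus the reflection symmetry $L(u_2/p)=L(p)$ giving $(L^n)_{[-k]}=u_2^k(L^n)_{[k]}$), and it is correct. For the logarithmic formulas your setup is also sound: writing $\log L=\tfrac12\log u_2+\sum_{k\ge1}a_k(p^{-k}+u_2^{-k}p^k)$ with $a_k=-h_k/(2k)$ and $h_k$ the power sums of the roots of $p^2+u_1p+u_2$ is right, and reducing (\ref{Fhn0}) to the identity $\sum_{k\ge1}\tfrac{h_k}{k}(L^n)_{[k]}=\sum_s\frac{n!}{s!s!(n-2s)!}u_1^{n-2s}u_2^{s}(d_s-d_n)$ is more explicit than the paper's one-line appeal to ``Taylor expansion with the prescription.'' However, your resolvent argument stops at ``expanding $\rho_\pm$ and extracting $[z^n]$ recovers the identity'': all of the actual content --- the emergence of the harmonic numbers $d_s$ --- sits in that unperformed extraction. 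The generating function $\log(z\rho_+)/\bigl(z(\rho_+-\rho_-)\bigr)$ is the right object and the identity is correct in low orders, but as written this is a strategy, not a proof.

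The more serious issue is the claim that $(L^n\log L)_{[-1]}$ is given by ``an analogous formula differing only by shifts.'' It is not: using $(L^n)_{[-k-1]}=u_2^{k+1}(L^n)_{[k+1]}$ one finds $(L^n\log L)_{[-1]}=\tfrac12F_{00}(L^n)_{[-1]}+\sum_{k\ge1}a_k\bigl[(L^n)_{[k-1]}+u_2(L^n)_{[k+1]}\bigr]$, and the $k=1$ term contributes $a_1(L^n)_{[0]}$, which has no counterpart in the $p^0$-computation where the two halves of $\log L$ contributed equally. Carrying the $p^{-1}$-extraction out honestly does \emph{not} reproduce (\ref{Fhn1}): for $n=0$ one gets $F_{\wh{1}1}=(L\log L)_{[-1]}-(L)_{[-1]}=\tfrac14F_{01}^2+\tfrac12e^{F_{00}}(F_{00}-1)$, whereas (\ref{Fhn1}) gives only $\tfrac12e^{F_{00}}(F_{00}-1)$. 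The extra $\tfrac14F_{01}^2$ is confirmed independently by the dHirota relation (\ref{F2}) with $n=0$, $m=1$ together with $F_{\wh{0}2}=\tfrac12F_{02}=e^{F_{00}}+\tfrac12F_{01}^2$, $F_{\wh{0}1}=\tfrac12F_{01}$ and $F_{\wh{0}0}=\tfrac12F_{00}$; the general correction appears to be $F_{01}^{n+2}/(2(n+2))$, which vanishes at $F_{01}=0$ --- the only case used later in the paper, so nothing downstream is affected. You therefore cannot claim your computation ``gives (\ref{Fhn1})'': either you must exhibit the asymmetric $k=1$ contribution and arrive at the corrected formula, or you are silently invoking the false ``shift only'' symmetry in order to land on the stated expression.
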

\begin{proof}
Using the binomial expansion of powers of $L$ in (\ref{bF1}) and the
Taylor expansion in (\ref{bF2}) with the prescription (\ref{logL-exp}) for $\log L$.
\end{proof}
We come now to the main result of the work; that is to derive the dHirota equation
for the extended dToda hierarchy from the Lax formulation. The result will be expressed in terms of
second derivatives of the free energy $F(t_0,t,\hat{t})$.
\begin{theorem} \label{F123} The free energy $F(t_0,t,\hat{t})$  of the extended dToda hierarchy satisfies the following equations
\bea
&&\frac{F_{n+1,m}}{n+1}+\frac{F_{n,m+1}}{m+1}=F_{m,0}F_{n,1}+
F_{m,1}F_{n,0},\quad (n\geq 1,m\geq 1)
\label{F1}\\
&&\frac{F_{\wh{n+1},m}}{n+1}+\frac{F_{\hat{n},m+1}}{m+1}=F_{m,0}F_{\hat{n},1}+
F_{m,1}F_{\hat{n},0},\quad (m\geq 1,n\geq 0) \label{F2} \\
&&\frac{F_{\wh{n+1},\wh{m}}}{n+1}+\frac{F_{\hat{n},\wh{m+1}}}{m+1}=F_{\wh{m},0}F_{\hat{n},1}+
F_{\wh{m},1}F_{\hat{n},0},\quad (m,n\geq 0) \label{F3}.
 \eea
\end{theorem}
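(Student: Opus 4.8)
The strategy is to pull the hierarchy flows straight out of the Lax and $S$-function data of Section~2 and repackage them as relations among the $F_{n,m}$. Throughout I would use the two available descriptions of the generating functions: the projections $B_n=(L^n)_{\ge0}$, $\hat B_n=(L^n(\log L-d_n))_{\ge0}$, and the $S$-function ones $B_n=\partial S/\partial t_n=L^n-\sum_{m\ge1}\frac{F_{nm}}{m}L^{-m}$, $\hat B_n=\partial S/\partial\hat t_n=L^n(\log L-d_n)-\sum_{m\ge1}\frac{F_{\hat n m}}{m}L^{-m}$, the latter being exactly the expansions already used to prove \eqref{bF1}--\eqref{bF2}. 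The point is to feed the first description into a recursion and read off the answer through the second.

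First I would establish one–step recursions of the form
\[
B_{n+1}=LB_n+F_{n1}-u_2F_{n0}\,p^{-1},\qquad
\hat B_{n+1}=L\hat B_n+F_{\hat n1}-u_2F_{\hat n0}\,p^{-1}-\tfrac1{n+1}B_{n+1},
\]
obtained by splitting $L^{n+1}=L\cdot L^n$ and $L^{n+1}(\log L-d_{n+1})=L\cdot L^n(\log L-d_n)-\tfrac1{n+1}L^{n+1}$ (using $d_{n+1}-d_n=\tfrac1{n+1}$) and applying the projection $(\;)_{\ge0}$. The only inputs are the elementary identities $(LA)_{\le-1}=u_2A_{[0]}p^{-1}$ and $(L\cdot A_{\le-1})_{\ge0}=A_{[-1]}$ for $A$ with no negative powers of $p$, plus \eqref{bF1}--\eqref{bF2} to name the constants the projection leaves behind. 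Substituting the $S$-function expansions of $B_n,\hat B_n$ into these recursions, and using $u_2p^{-1}=(L)_{\le-1}=\sum_{k\ge1}\frac{F_{1k}}{k}L^{-k}$ for the $p^{-1}$ terms, I would then compare coefficients of $L^{-m}$ as formal series in $L^{-1}$. This produces ``one half'' of each of \eqref{F1}--\eqref{F3}: e.g.\ $\frac{F_{n+1,m}}{m}-\frac{F_{n,m+1}}{m+1}=\frac{F_{n0}F_{1m}}{m}$ from the $B$-recursion, a similar relation carrying an extra $\tfrac1{n+1}F_{n+1,m}$ from the $\hat B$-recursion, and the version with both indices logarithmic after recording in addition the residue formulas $F_{n,m}=-\res(B_n\,dL^m)$, $F_{\hat n,m}=-\res(\hat B_n\,dL^m)$ and the auxiliary identities $\res(L^a\,dL)=\delta_{a,-1}$, $\res(L^a\log L\,dL)=0$ $(a\ge0)$ that already underlie \eqref{bF1}--\eqref{bF2}.

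None of these half–identities is symmetric in its two index slots, whereas each of \eqref{F1}--\eqref{F3} is, once one uses $F_{ab}=F_{ba}$ (equations \eqref{F1} and \eqref{F3} literally under $n\leftrightarrow m$, and \eqref{F2} under the exchange of the unhatted and hatted slots). The final step is therefore to pair each half–identity with the companion obtained by running the same procedure with the two slots interchanged, and to solve the resulting $2\times2$ linear system for the two off–diagonal second derivatives. The determinant comes out to $\frac{m+n+1}{(m+1)(n+1)}$, and the surviving combination has both coefficients equal to $1$; this is precisely what turns a $\frac{F_\bullet}{m}-\frac{F_\bullet}{m+1}$ pattern into the $\frac{F_\bullet}{n+1}+\frac{F_\bullet}{m+1}$ pattern of the theorem. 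Equation \eqref{F1} is then exactly the dispersionless Toda Hirota equation of Kodama--Pierce and serves as the template; \eqref{F2}--\eqref{F3} are the new content, obtained by carrying the argument through the logarithmic flows.

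I expect the main obstacle to be this symmetrization in the mixed case \eqref{F2}: there the companion relation is genuinely not a relabelling of the first one (it comes from pairing the $B$-recursion against the logarithmic data rather than the polynomial data, and the residual \emph{unhatted} contribution it generates has to be cleared using \eqref{F1} itself). A secondary but real difficulty is the bookkeeping of the logarithmic residues in the $\hat B$-recursion — terms such as $\res(L^{-1}\log L\,dL)$ must be handled through $\log L\,\tfrac{dL}{L}=\tfrac12\,d(\log L)^2$ and shown to contribute only through the $d_n$-shifts, so that no spurious $\log$-dependent pieces survive in the stated ranges $n\ge1,\ m\ge1$ (resp.\ $n\ge0$, $m,n\ge0$). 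Everything else is the routine binomial/Taylor expansion bookkeeping already exercised in Proposition~\ref{Fn01}.
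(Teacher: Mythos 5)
Your derivation of (\ref{F1}) is correct and takes a genuinely different route from the paper's. I checked the recursion $B_{n+1}=LB_n+F_{n1}-u_2F_{n0}p^{-1}$, the resulting half-identity $\frac{F_{n+1,m}}{m}-\frac{F_{n,m+1}}{m+1}=\frac{F_{n0}F_{m1}}{m}$, and the $2\times2$ symmetrization; solving the pair of half-identities even yields the closed form $F_{n+1,m}=\frac{(m+1)(n+1)F_{n0}F_{m1}+m(n+1)F_{m0}F_{n1}}{m+n+1}$, which is stronger than (\ref{F1}). The paper instead differentiates the one-point data: from $F_{n+1,0}=(L^{n+1})_{[0]}$ and the Lax equation one gets $\frac{F_{m,n+1,0}}{n+1}=F_{n,0}F_{m,1,0}+F_{n,1}F_{m,0,0}$, so the symmetric combination $\frac{F_{n+1,m}}{n+1}+\frac{F_{n,m+1}}{m+1}$ is a total $t_0$-derivative of the right-hand side and one integrates once in $t_0$. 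That argument is shorter and, crucially, applies verbatim to all three families upon replacing $L^{n+1}$ by $L^{n+1}(\log L-d_{n+1})$ and $t_m$ by $\hat t_m$.

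The genuine gap in your proposal is in (\ref{F2}) and, more seriously, (\ref{F3}). For (\ref{F2}), the $\hat B$-recursion supplies one linear relation between the two unknowns $F_{\wh{n+1},m}$ and $F_{\hat n,m+1}$; since (\ref{F2}) is an independent linear relation between the same two quantities, a second one is indispensable, and, unlike the (\ref{F1}) case, it cannot come from relabelling because the $S$-function expansion of $B_m$ produces only coefficients $F_{mk}$ with unhatted $k$. You only gesture at ``pairing the $B$-recursion against the logarithmic data''; the natural such pairing, $F_{\hat n,m+1}=\res\bigl((L^{m+1})_{\le-1}\,d\hat B_n\bigr)$ expanded through $L^{m+1}=L\cdot L^m$ and the $L^{-1}$-expansions, hits the term $\res\bigl(L^{-1}(\log L-d_n)\,dL\bigr)$, and $\res(L^{-1}\log L\,dL)$ is not a well-defined (let alone vanishing) formal residue --- only $\res(L^a\log L\,dL)=0$ for $a\ge0$ is available. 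The spurious $\log p$ terms do cancel in aggregate, but only because $(L^n(\log L-d_n))_{\le-1}=\sum_j\frac{F_{\hat nj}}{j}L^{-j}$ itself carries a hidden $(L^n)_{\le-1}\log p$ piece; tracking that cancellation is the actual content of the proof in your scheme, and you have not supplied it. For (\ref{F3}) the obstruction is structural: $F_{\wh{n}\wh{m}}$ never occurs as an $L^{-1}$-coefficient or residue of any object in your toolkit --- the expansions of $B_n$ and $\hat B_n$ encode only $F_{nm}$ and $F_{\wh{n}m}$ --- so the coefficient-comparison machinery has nothing to compare. To reach $F_{\wh{n}\wh{m}}$ you must differentiate $(\hat B_{n+1})_{[0]}$ and $\res\bigl(L^{n+1}(\log L-d_{n+1})\bigr)$ along the $\hat t_m$-flow via the Lax equation, which is exactly the step your proposal was designed to avoid and is how the paper proves all three identities at once.
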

\begin{proof}
To prove (\ref{F1}), we note that
 \bean
 F_{m,n+1,0}&=&\frac{\pa F_{n+1,0}}{\pa t_{m}}
 =\frac{\pa (L^{n+1})_{[0]}}{\pa t_{m}}=(n+1)\left(L^{n}\frac{\pa L}{\pa t_{m}}\right)_{[0]}\\
 &=&(n+1)\left((B_{n})_{[0]}\frac{\pa u_1}{\pa t_{m}}+(B_{n})_{[1]}
 \frac{\pa u_2}{\pa t_{m}}\right)\\
 &=&(n+1)(F_{n,0}F_{m,1,0}+F_{m,0,0}F_{n,1})
   \eean
   where $u_1=F_{01}$ and $u_2=e^{F_{00}}$ have been used to reach the last equality.
   Similarly, we have
   \[
F_{n,m+1,0}=
(m+1)(F_{m,0}F_{n,1,0}+F_{n,0,0}F_{m,1}).
   \]
   Hence
   \[
\frac{F_{n+1,m}}{n+1}+\frac{F_{n,m+1}}{m+1}=F_{m,0}F_{n,1}+
F_{m,1}F_{n,0},\quad (n\geq 1,m\geq 1).
   \]
   Equations (\ref{F2}) and (\ref{F3}) can be verified in a similar manner.
\end{proof}
 \begin{corollary}
The two point functions $F_{mn}$, $F_{\wh{m}n}$, and
$F_{\wh{m}\wh{n}}$ are all determined by the fundamental variables
$F_{00}$ and $F_{01}$.
 \end{corollary}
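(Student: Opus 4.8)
The plan is to run a double induction in which Proposition~\ref{Fn01} provides the base cases and Theorem~\ref{F123} provides the inductive step. Proposition~\ref{Fn01} already expresses every ``boundary'' coefficient $F_{n0},F_{n1},F_{\wh{n}0},F_{\wh{n}1}$ explicitly in terms of $F_{00}$ and $F_{01}$, and by the symmetry of second derivatives, $F_{nm}=F_{mn}$ and $F_{\wh{n}m}=F_{m\wh{n}}$, this covers every coefficient one of whose indices is an un-hatted $0$ or $1$. Hence it suffices to reach $F_{nm}$ with $n,m\ge 2$, all $F_{\wh{n}m}$ with $m\ge 2$, and all $F_{\wh{n}\wh{m}}$.

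For the un-hatted coefficients I would solve (\ref{F1}) for the term carrying the larger second index,
\[
F_{n,m+1}=(m+1)\bigl(F_{m,0}F_{n,1}+F_{m,1}F_{n,0}\bigr)-\frac{m+1}{n+1}F_{n+1,m},
\]
and then walk along the anti-diagonal $n+m=N$ in the direction of decreasing first index: one starts at $F_{N-1,1}$ (known from Proposition~\ref{Fn01}) and successively reads off $F_{N-2,2},F_{N-3,3},\dots$, noting that the right-hand side at each step involves only boundary data. Inducting on $N$ then yields all $F_{nm}$. Equation (\ref{F2}) is used in exactly the same way for the mixed coefficients: solve for $F_{\wh{n},m+1}$, start the anti-diagonal at $F_{\wh{N-1},1}$ (again from Proposition~\ref{Fn01}), and slide down to $F_{\wh{0},N}$; every right-hand side again consists only of boundary data.

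The delicate part is the fully hatted sector, since Proposition~\ref{Fn01} supplies no hatted--hatted value to start from. Here I would first specialize (\ref{F3}) to $m=n$; using $F_{\wh{a}\wh{b}}=F_{\wh{b}\wh{a}}$ it collapses to $F_{\wh{n},\wh{n+1}}=(n+1)F_{\wh{n},0}F_{\wh{n},1}$, which already furnishes one seed on each odd anti-diagonal $n+m=2r+1$, and then the remaining instances of (\ref{F3}) propagate this along the diagonal exactly as before. On an even anti-diagonal $n+m=2r$ a quick count shows that (\ref{F3}) pins the entries down only up to a single parameter, which can be taken to be the diagonal value $F_{\wh{r}\wh{r}}$; this last datum I would extract from the string equation (\ref{string-eq}), after which the even diagonals close as well. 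Assembling the three inductions gives the corollary, and I expect the only real obstacle to be precisely this last bookkeeping step: isolating and determining the missing hatted--hatted data on the even anti-diagonals.
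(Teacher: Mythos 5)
Your treatment of the un-hatted and mixed sectors is correct and is exactly the mechanism the paper itself exploits in Section~4: each anti-diagonal is a chain whose links are instances of (\ref{F1}) or (\ref{F2}), with right-hand sides built only from the boundary data of Proposition~\ref{Fn01}, seeded at the $F_{\cdot,1}$ end. Your observation that the $m=n$ case of (\ref{F3}) collapses, via $F_{\wh{a}\wh{b}}=F_{\wh{b}\wh{a}}$, to $F_{\wh{n},\wh{n+1}}=(n+1)F_{\wh{n},0}F_{\wh{n},1}$ is also correct and cleanly seeds the odd anti-diagonals of the fully hatted sector, and your parameter count for the even anti-diagonals ($r+1$ distinct unknowns linked by $r$ independent relations, since the relations at $(n,m)$ and $(m,n)$ coincide) is right.

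The gap is the final step. You leave the one missing datum per even anti-diagonal to be ``extracted from the string equation,'' but you do not show how; (\ref{string-eq}) is an identity for $L$ with explicit time dependence, and it is not clear that differentiating it isolates $F_{\wh{r},\wh{r}}$ without reintroducing unknown hatted--hatted quantities. More to the point, no such detour is needed: the seed you are missing is $F_{\wh{N},\wh{0}}$, and it is boundary data in disguise. Since $\hat{B}_0=(\log L)_{\geq 0}$ has $(\hat{B}_0)_{[0]}=\tfrac{1}{2}\log u_2=\tfrac{1}{2}F_{00}$, the $\hat{t}_0$-flow is half the $t_0$-flow (the paper records this as $\hat{t}_0=2t_0$ in Section~5), so $F_{\wh{n}\wh{0}}=\tfrac{1}{2}F_{\wh{n}0}$, which is known from (\ref{Fhn0}). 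This is exactly the term that closes the paper's telescoping sum for $F_{\wh{2k},\wh{2l}}$ (note the factor $\tfrac{1}{2}$ multiplying $(k+l+1)d_{k+l}C_{k+l}$ in that computation). With this substitution your induction closes; without it, the even anti-diagonals of the fully hatted sector remain undetermined, so the proof as written is incomplete.
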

 \begin{proof}
 This is just an immediate consequence of Proposition \ref{Fn01} and Theorem \ref{F123}.
 \end{proof}
 We shall show later on that the expression of (\ref{F1})-(\ref{F3})
has a simple interpretation from topological field theory.
\section{Catalan numbers and two-point functions}
 From dispersionless Hirota equations (\ref{F1})-(\ref{F3}), we see that
the building blocks are the two-point functions (\ref{bF1}) and
(\ref{bF2}). Motivated by the work of Kodama and Pierce \cite{KP09}
we like to consider the two-point functions $F_{mn}$, $F_{\wh{m}n}$, and
$F_{\wh{m}\wh{n}}$ in the case with $F_{00}=F_{01}=0$.
 \begin{proposition}
 \bea
 F_{2k,0}&=&(k+1)C_k,\quad F_{2k+1,0}=0
\label{B1-1}\\
F_{2k+1,1}&=&(2k+1)C_k,\quad F_{2k,1}=0 \label{B1-2}\\
F_{\wh{2k},0}&=&-(k+1)d_kC_k,\quad F_{\wh{2k+1},0}=0
\label{B2-1}\\
F_{\wh{2k+1},1}&=&-(2k+1)\left(d_k+\frac{1}{2(k+1)}\right)C_k,\quad
F_{\wh{2k},1}=0 \label{B2-2}
 \eea
where $C_k$ is the $k$-th Catalan number defined by
\[
C_k=\frac{1}{k+1}{2k\choose k}.
\]
 \end{proposition}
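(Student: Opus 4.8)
The plan is to read off all four identities as immediate specializations of Proposition~\ref{Fn01}: simply put $F_{00}=F_{01}=0$ in the closed forms $(\ref{Fn0})$--$(\ref{Fhn1})$. The key observation is that once $F_{01}=0$, every factor $F_{01}^{\,n-2s}$ in those sums vanishes unless its exponent is zero, i.e. unless $n=2s$, with the standing convention $0^0=1$. Hence each sum collapses to a single term, the one with $s=n/2$ when $n$ is even, and to the empty sum (read as $0$) when $n$ is odd; setting in addition $F_{00}=0$ turns every $e^{sF_{00}}$ or $e^{(s+1)F_{00}}$ into $1$. This already yields the parity statements $F_{2k+1,0}=F_{2k,1}=F_{\wh{2k+1},0}=F_{\wh{2k},1}=0$.

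First I would dispose of $(\ref{B1-1})$ and $(\ref{B1-2})$. From $(\ref{Fn0})$ the surviving $s=k$ term gives $F_{2k,0}=\tfrac{(2k)!}{k!\,k!}=\binom{2k}{k}$, and the elementary identity $\binom{2k}{k}=(k+1)C_k$ rewrites this as $(k+1)C_k$. From $(\ref{Fn1})$, taking the running index so that the left-hand side is $F_{2k+1,1}$ (i.e. $s=k$, $n-2s=0$), the surviving term is $F_{2k+1,1}=\tfrac{(2k+1)!}{k!\,(k+1)!}$, and the rewriting $\tfrac{(2k+1)!}{k!\,(k+1)!}=(2k+1)\,\tfrac{1}{k+1}\binom{2k}{k}=(2k+1)C_k$ finishes $(\ref{B1-2})$. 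The $k=0$ case $F_{11}=C_0=1$ reproduces $u_2=e^{F_{00}}=1$, as it should.

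Next, the logarithmic pair. From $(\ref{Fhn0})$ the $s=k$ term evaluated at $F_{00}=0$ is $F_{\wh{2k},0}=\tfrac12\binom{2k}{k}(F_{00}-2d_k)\big|_{F_{00}=0}=-d_k\binom{2k}{k}=-(k+1)d_kC_k$, which is $(\ref{B2-1})$. From $(\ref{Fhn1})$, the $s=k$ term at $F_{00}=0$ gives
\[
F_{\wh{2k+1},1}=\frac{1}{2}\,\frac{(2k+1)!}{k!\,(k+1)!}\left(-2d_k-\frac{1}{k+1}\right)
=-(2k+1)C_k\left(d_k+\frac{1}{2(k+1)}\right),
\]
establishing $(\ref{B2-2})$.

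There is essentially no obstacle: the whole computation is substitution plus the two Catalan rewritings $\binom{2k}{k}=(k+1)C_k$ and $(2k+1)\tfrac{1}{k+1}\binom{2k}{k}=(2k+1)C_k$. The only point that wants a little care is the bookkeeping of the $0^0=1$ convention when identifying the surviving term. If one prefers to bypass that convention, one can instead feed $u_1=F_{01}=0$, $u_2=e^{F_{00}}=1$ directly into $(\ref{bF1})$--$(\ref{bF2})$: then $L=p+p^{-1}$, so $(L^n)_{[0]}$ and $\res(L^n)$ are the familiar central-binomial/Catalan entries, and the logarithmic values follow by differentiating $L^n(\log L-d_n)$ in $u_2$ at $u_2=1$. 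Either route produces the stated formulas with no genuine difficulty.
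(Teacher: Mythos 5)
Your proof is correct and follows exactly the paper's route: the paper's own proof is the one-line remark that the proposition follows by setting $u_1=F_{01}=0$ and $u_2=e^{F_{00}}=1$ in $(\ref{Fn0})$--$(\ref{Fhn1})$, and you have simply carried out that substitution in detail (correctly isolating the single surviving $s=n/2$ term and applying the Catalan rewritings $\binom{2k}{k}=(k+1)C_k$ and $\frac{(2k+1)!}{k!(k+1)!}=(2k+1)C_k$).
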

 \begin{proof}
This is an immediate consequence by setting $u_1=F_{01}=0$ and
$u_2=F_{11}=e^{{F_{00}}}=1$ in the equations
(\ref{Fn0})-(\ref{Fhn1}).
 \end{proof}
Let us derive the two point functions $F_{n,m}$ from the dHirota equation (\ref{F1}).
\begin{theorem} (Kodama and Pierce\cite{KP09})
The two point function $F_{nm}$ for the extended dToda hierarchy
with $F_{00}=0$ and $F_{01}=0$ are given by
 \bean
F_{2k,0}&=&(k+1)C_k,\quad k=1,2,\cdots\\
F_{2k+1,2l+1}&=&\frac{(2l+1)(2k+1)(l+1)(k+1)}{l+k+1}C_kC_l,\quad k,l=0,1,2,\cdots\\
F_{2k,2l}&=&\frac{lk(l+1)(k+1)}{l+k}C_kC_l\quad k,l=1,2,\cdots\\
F_{nm}&=&0,\quad {\rm otherwise\/}
  \eean
  where $C_k$ is the $k$-th Catalan number.
 \end{theorem}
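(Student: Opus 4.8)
The plan is to feed the initial values from the previous Proposition into the dHirota equation (\ref{F1}) and solve the resulting recursion. First I would record the base data: from (\ref{B1-1})--(\ref{B1-2}) we have $F_{2k,0}=(k+1)C_k$ (with $F_{2k+1,0}=0$), $F_{2k+1,1}=(2k+1)C_k$ (with $F_{2k,1}=0$), together with $F_{0,0}=F_{0,1}=0$ by hypothesis. Since the right-hand side of (\ref{F1}), namely $F_{m,0}F_{n,1}+F_{m,1}F_{n,0}$, involves only the column-$0$ and column-$1$ functions, it is entirely known in closed form; in particular it vanishes unless the parities of $m$ and $n$ are opposite (one even, one odd), which already forces $F_{nm}=0$ whenever $n+m$ is odd and suggests the two nontrivial parity sectors ``both even'' and ``both odd.''

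Next I would set up the recursion proper. Reading (\ref{F1}) as
\[
\frac{F_{n+1,m}}{n+1}+\frac{F_{n,m+1}}{m+1}=F_{m,0}F_{n,1}+F_{m,1}F_{n,0},
\]
I would fix the odd-odd sector first: put $n=2k$, $m=2l+1$ (or $n=2l+1$, $m=2k$), so the left side is $\frac{F_{2k+1,2l+1}}{2k+1}+\frac{F_{2k,2l+2}}{2l+2}$ and the right side is $F_{2l+1,0}F_{2k,1}+F_{2l+1,1}F_{2k,0}=(2l+1)C_l\cdot(k+1)C_k$ (the other term dropping by parity). This expresses $F_{2k+1,2l+1}$ in terms of the proposed closed form plus an even-even term $F_{2k,2l+2}$, so I would treat the two sectors together. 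For the even-even sector, take $n=2k-1$, $m=2l$: the right side is $F_{2l,0}F_{2k-1,1}+F_{2l,1}F_{2k-1,0}=(l+1)C_l\cdot(2k-1)C_{k-1}$, giving $\frac{F_{2k,2l}}{2k}+\frac{F_{2k-1,2l+1}}{2l+1}$ in terms of known quantities. The strategy is then a straightforward double induction: verify that the claimed formulas
\[
F_{2k+1,2l+1}=\frac{(2l+1)(2k+1)(l+1)(k+1)}{l+k+1}C_kC_l,\qquad
F_{2k,2l}=\frac{lk(l+1)(k+1)}{l+k}C_kC_l
\]
satisfy both families of relations, using the Catalan identity $C_k=\frac{2(2k-1)}{k+1}C_{k-1}$ to match coefficients. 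One also checks the boundary cases $l=0$ or $k=0$ against (\ref{B1-1})--(\ref{B1-2}) and the symmetry $F_{nm}=F_{mn}$, which the closed forms manifestly respect.

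The main obstacle, modest as it is, will be the bookkeeping in the simultaneous induction: each instance of (\ref{F1}) mixes one odd-odd value with one even-even value, so neither formula can be verified in isolation, and I would need to be careful about the order in which $(k,l)$ pairs are resolved (e.g. induct on $k+l$, using the boundary column data to seed $k+l$ small). The algebraic verification itself reduces to checking, after clearing the $C_kC_l$ factor and applying the Catalan recursion, a rational-function identity in $k$ and $l$ of the shape
\[
\frac{(2l+1)(l+1)(k+1)}{l+k+1}+\frac{(k+1)l(l+1)}{l+k+1}
=(l+1)(k+1),
\]
which is elementary but must be done for each of the two relation families. Everything else — the parity vanishing and the consistency with the known two-point data — is immediate from Proposition \ref{Fn01} and Theorem \ref{F123}.
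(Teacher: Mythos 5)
Your proposal is correct in substance but follows a genuinely different route from the paper. The paper \emph{derives} $F_{2k,2l}$ constructively: it telescopes the dHirota relations (\ref{F1}) along the antidiagonal $n+m=2(k+l)$ to write $F_{2k,2l}$ as an explicit finite sum $\sum_{i=0}^{l-1}\frac{(2l)!(2k)!}{(2l-2i-2)!(2k+2i)!}[(k+i+1)-(l-i)]C_{k+i}C_{l-i-1}$ of products of the boundary data (\ref{B1-1})--(\ref{B1-2}), and then evaluates this sum in closed form via the binomial identity $\sum_{i=0}^{p-1}\binom{k+l-1}{i}\bigl[\binom{k+l}{i+1}-\binom{k+l}{i}\bigr]=\binom{k+l-1}{p-1}\binom{k+l-1}{p}$ (proved by Pascal's rule), after which $F_{2k+1,2l+1}$ follows from one further application of (\ref{F1}). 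You instead propose to \emph{verify} that the stated closed forms satisfy every instance of (\ref{F1}) together with the boundary columns, and to conclude by uniqueness of the solution of the antidiagonal chain seeded by $F_{N-1,1}$. That is logically sound: the parity analysis is right, each check reduces (as I confirmed) to a two-term identity such as $\frac{(2l+1)(l+1)(k+1)}{l+k+1}C_kC_l+\frac{k(k+1)(2l+1)}{l+k+1}C_kC_l=(2l+1)(k+1)C_kC_l$ after applying $C_{l+1}=\frac{2(2l+1)}{l+2}C_l$, and it avoids the paper's binomial summation entirely. The trade-off is that your argument presupposes the answer and proves nothing new beyond consistency plus uniqueness, while the paper's telescoping actually produces the formula. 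Two small cautions: the sample identity you display is not literally true (its two sides agree only when $k=2l$), so when writing this up you must carry out the genuine coefficient check rather than that schematic one; and ``induct on $k+l$'' does not by itself order the entries within a fixed antidiagonal --- state explicitly that each antidiagonal is a linear chain uniquely determined by its column-$1$ endpoint, which is what makes the verification conclusive.
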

\begin{proof}
Here we present a derivation of $F_{2k,2l}$ from the dHirota
equation (\ref{F1}). Writing $F_{2k,2l}$ in the expression
 \bean
  F_{2k,2l}&=&(F_{2k,2l}+\frac{2l}{2k+1}F_{2k+1,2l-1})
  -\frac{2l}{2k+1}(F_{2k+1,2l-1}+\frac{2l-1}{2k+2}F_{2k+2,2l-2})\\
&&+\frac{2l(2l-1)}{(2k+1)(2k+2)}(F_{2k+2,2l-2}+\frac{2l-2}{2k+3}F_{2k+3,2l-3})+\cdots\\
&&+\frac{2l(2l-1)\cdots 3}{(2k+1)(2k+2)\cdots (2k+2l-2)}(F_{2k+2l-2,2}+\frac{2}{2k+2l-1}F_{2k+2l-1,1})\\
&&-\frac{2l!}{(2k+1)(2k+2)\cdots (2k+2l-1)}F_{2k+2l-1,1}.
 \eean
Then, using  the dHirota equation (\ref{F1}), we have
 \bean
  F_{2k,2l}&=&2l(F_{2k,0}F_{2l-1,1}+F_{2k,1}F_{2l-1,0})
  -\frac{2l(2l-1)}{2k+1}(F_{2k+1,0}F_{2l-2,1}+F_{2k+1,1}F_{2l-2,0})\\
&&+\frac{2l(2l-1)(2l-2)}{(2k+1)(2k+2)}(F_{2k+2,0}F_{2l-3,1}+F_{2k+2,1}F_{2l-3,0})+\cdots\\
&&-\frac{2l!}{(2k+1)(2k+2)\cdots (2k+2l-1)}F_{2k+2l-1,1}.
 \eean
Taking into account  (\ref{B1-1}) and (\ref{B1-2}) we get
 \bean
  F_{2k,2l}&=&2l((k+1)C_k(2l-1)C_{l-1})-\frac{2l(2l-1)}{2k+1}((2k+1)C_klC_{l-1})\\
&&+\frac{2l(2l-1)(2l-2)}{(2k+1)(2k+2)}((k+2)C_{k+1}(2l-3)C_{l-2})+\cdots\\
&&+\frac{(2l)!}{(2k+1)(2k+2)\cdots (2k+2l-2)}((k+l)C_{k+l-1})\\
&&-\frac{(2l)!}{(2k+1)(2k+2)\cdots (2k+2l-1)}((2k+2l-1)C_{k+l-1})\\
&=&\sum_{i=0}^{l-1}\frac{(2l)!(2k)!}{(2l-2i-2)!(2k+2i)!}[(k+i+1)-(l-i)]C_{k+i}C_{l-i-1}\\
&=&\frac{l!k!(l+1)!(k+1)!C_kC_l}{(k+l)!(k+l-1)!}
 \sum_{i=0}^{l-1}{k+l-1\choose i}\left[{k+l\choose i+1}-{k+l\choose i}\right]
 \eean
where the formula
\[
C_{k+p}=2^p\frac{(2k+2p-1)!!(k+1)!}{(2k-1)!!(k+p+1)!}C_k
\]
for the Catalan numbers has been used. Since for any $p>1$, we have
 \bean
&&\sum_{i=0}^{p-1}{k+l-1\choose i}\left[{k+l\choose i+1}-{k+l\choose
i}\right]=\sum_{i=1}^{p}{k+l-1\choose i-1}\left[{k+l\choose
i}-{k+l\choose
i-1}\right]\\
&&=\sum_{i=2}^{p}{k+l-1\choose i-1}\left[{k+l-1\choose
i}+{k+l-1\choose i-1}-{k+l-1\choose i-1}-{k+l-1\choose
i-2}\right]+(k+l-1)\\
&&={k+l-1\choose p-1}{k+l-1\choose p}
 \eean
where the Pascal identity ${a\choose b}={a-1\choose b}+{a-1\choose b-1}$
has been used to reach the second equality.

Hence,
 \bean F_{2k,2l}
 &=&\frac{l!k!(l+1)!(k+1)!C_kC_l}{(k+l)!(k+l-1)!}{k+l-1\choose
l-1}{k+l-1\choose l}\\
&=&\frac{lk(l+1)(k+1)}{k+l}C_kC_l.
 \eean
 Substituting $F_{2k,2l}$ into (\ref{F1}) for $n=2k$ and $m=2l+1$, we obtain
 \[
F_{2k+1,2l+1}=\frac{(2l+1)(2k+1)(l+1)(k+1)}{k+l+1}C_kC_l.
 \]
 This is just the result obtained by Kodama and Pierce in \cite{KP09}.
\end{proof}
\begin{corollary}
The two point functions $F_{mn}$ for $m,n\geq 0$ are
positive-defined, i.e. $F_{mn}\geq 0$.
\end{corollary}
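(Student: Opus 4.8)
The plan is to obtain the inequality directly from the explicit closed forms for the two-point functions established in the preceding theorem. Recall that, with $F_{00}=F_{01}=0$, the only nonzero entries are of one of the three types
\[
F_{2k,0}=(k+1)C_k\ (k\ge 1),\qquad
F_{2k,2l}=\frac{lk(l+1)(k+1)}{l+k}\,C_kC_l\ (k,l\ge 1),\qquad
F_{2k+1,2l+1}=\frac{(2l+1)(2k+1)(l+1)(k+1)}{l+k+1}\,C_kC_l\ (k,l\ge 0),
\]
together with $F_{nm}=0$ in all remaining cases, where $C_k=\frac{1}{k+1}\binom{2k}{k}$ is the $k$-th Catalan number.

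First I would record the classical fact that each $C_k$ is a positive integer, so that $C_kC_l>0$ for all $k,l\ge 0$. Then I would check that the rational prefactors above are strictly positive on the stated index ranges: in $F_{2k,2l}$ the denominator is $l+k\ge 2$ because $k,l\ge 1$, in $F_{2k+1,2l+1}$ the denominator is $l+k+1\ge 1$, and every numerator is a product of the positive integers $k+1$, $l+1$, $2k+1$, $2l+1$, $lk$. Hence each of the three families consists of strictly positive rationals, and every other $F_{mn}$ vanishes, so $F_{mn}\ge 0$ for all the entries tabulated in the theorem. The few remaining boundary entries are handled by the symmetry $F_{mn}=F_{nm}$ (valid since $F_{mn}=\pa_{t_n}\pa_{t_m}F$): this sends $F_{0,2l}$ to $F_{2l,0}=(l+1)C_l\ge 0$ and $F_{0,2l+1}$ to $F_{2l+1,0}=0$, while $F_{00}=F_{01}=F_{10}=0$ by hypothesis. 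Collecting these cases yields $F_{mn}\ge 0$ for all $m,n\ge 0$.

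I do not expect a real obstacle here: once the formulas of the preceding theorem are available the statement is essentially immediate, the only substantive points being the (classical) positivity of the Catalan numbers and the non-vanishing of the denominators $l+k$ and $l+k+1$ on the relevant ranges. One could instead look for a self-contained induction from the dHirota relation (\ref{F1}) seeded by the positive initial data (\ref{B1-1})--(\ref{B1-2}); the catch is that (\ref{F1}) only governs the combination $F_{n+1,m}/(n+1)+F_{n,m+1}/(m+1)$, so splitting off the positivity of the individual entries would still require the explicit solution, which is why reading off the closed form is the cleaner route.
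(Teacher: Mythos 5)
Your proposal is correct and matches the paper's (implicit) argument: the corollary is stated without proof precisely because it is an immediate reading-off of the closed forms in the preceding theorem, where every nonzero $F_{mn}$ is a positive rational multiple of a product of Catalan numbers. Your additional care with the boundary cases $F_{0,n}$ via the symmetry $F_{mn}=F_{nm}$ is a reasonable tidying-up of the same observation.
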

Next we deal with the two point function $F_{\wh{n}m}$ .
\begin{theorem}
The two point function $F_{\wh{n}m}$ for the extended dToda
hierarchy with $F_{00}=0$ and $F_{01}=0$ are given by
 \bean
F_{\wh{2k},0}&=&-(k+1)d_kC_k,\quad k=1,2,\cdots\\
F_{\wh{2k+1},2l+1}&=&-\frac{(2l+1)(2k+1)(l+1)(k+1)}{l+k+1}\left(d_k+\frac{1}{2(l+k+1)}\right)C_kC_l,
\quad k,l=0,1,2,\cdots\\
F_{\wh{2k},2l}&=&-\frac{lk(l+1)(k+1)}{l+k}\left(d_k-\frac{l}{2k(l+k)}\right)C_kC_j\quad k,l=1,2,\cdots\\
F_{\wh{n}m}&=&0,\quad {\rm otherwise\/}.
 \eean
\end{theorem}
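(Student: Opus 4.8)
The plan is to run the argument of the preceding theorem with (\ref{F2}) playing the role that (\ref{F1}) played there. The crucial structural fact is that (\ref{F2}) is \emph{linear} in the hatted two-point functions, its coefficients $F_{m,0}$ and $F_{m,1}$ being precisely the unhatted data already determined in (\ref{B1-1})--(\ref{B1-2}); hence, together with the ``boundary values'' $F_{\wh{n}0}$ and $F_{\wh{n}1}$ from (\ref{B2-1})--(\ref{B2-2}), the recursion (\ref{F2}) fixes every $F_{\wh{n}m}$. I would first settle the vanishing statement by a one-line induction: since $F_{\wh{2k+1},0}=F_{\wh{2k},1}=0$ and $F_{2k+1,0}=F_{2k,1}=0$, every term on either side of (\ref{F2}) vanishes unless the two subscripts have the same parity, so $F_{\wh{n}m}=0$ whenever $n\not\equiv m$. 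It then suffices to compute $F_{\wh{2k},2l}$: the value $F_{\wh{2k},0}$ is the initial datum (\ref{B2-1}), and once $F_{\wh{2k},2l}$ is known the odd--odd case $F_{\wh{2k+1},2l+1}$ is read off from (\ref{F2}) just as $F_{2k+1,2l+1}$ was extracted from $F_{2k,2l}$ in the previous theorem.

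For $F_{\wh{2k},2l}$ I would reuse verbatim the telescoping expansion of that proof,
\[
F_{\wh{2k},2l}=\sum_{i=0}^{2l-2}(-1)^i\frac{(2l)!\,(2k)!}{(2l-i)!\,(2k+i)!}\Big(F_{\wh{2k+i},2l-i}+\tfrac{2l-i}{2k+i+1}F_{\wh{2k+i+1},2l-i-1}\Big)-\frac{(2l)!\,(2k)!}{(2k+2l-1)!}\,F_{\wh{2k+2l-1},1},
\]
and replace each bracket, using (\ref{F2}), by $(2l-i)\bigl(F_{2l-i-1,0}F_{\wh{2k+i},1}+F_{2l-i-1,1}F_{\wh{2k+i},0}\bigr)$. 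Plugging in the closed forms (\ref{B1-1}), (\ref{B1-2}), (\ref{B2-1}), (\ref{B2-2}) kills half the terms on parity grounds and leaves a finite sum of the shape $\sum_{j}c_{j}\bigl(\alpha_{j}d_{k+j}+\beta_{j}\bigr)C_{k+j}C_{l-j-1}$ with explicit rational $c_j,\alpha_j,\beta_j$; here the $\beta_j$ carry the stray $\tfrac1{2(k+j+1)}$ coming from the $F_{\wh{2m+1},1}$.

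The main obstacle is evaluating this sum in closed form. My plan is to write $d_{k+j}=d_k+\sum_{r=1}^{j}\tfrac1{k+r}$. The $d_k$-part is simply $d_k$ times $\sum_j c_j\alpha_j C_{k+j}C_{l-j-1}$, which up to sign is exactly the unhatted sum already evaluated in the previous theorem; it therefore contributes the term $-\tfrac{lk(l+1)(k+1)}{l+k}d_kC_kC_l$. The remaining part is a finite harmonic-weighted binomial sum, to be handled by an Abel (summation-by-parts) step that uses $d_{k+i+1}-d_{k+i}=\tfrac1{k+i+1}$ to strip the harmonic weight, after which the decisive inputs are the Pascal-telescoping identity
\[
\sum_{i=0}^{p-1}{k+l-1\choose i}\Big[{k+l\choose i+1}-{k+l\choose i}\Big]={k+l-1\choose p-1}{k+l-1\choose p}
\]
and the Catalan identity $C_{k+p}=2^p\frac{(2k+2p-1)!!\,(k+1)!}{(2k-1)!!\,(k+p+1)!}C_k$ quoted earlier in the excerpt. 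Carrying this through should collapse the residual to $\tfrac{l}{2k(l+k)}\cdot\tfrac{lk(l+1)(k+1)}{l+k}C_kC_l$, producing the stated formula for $F_{\wh{2k},2l}$; the identical manipulation, now also tracking the $\tfrac1{2(k+j+1)}$ contributions, yields the $d_k+\tfrac1{2(l+k+1)}$ factor in $F_{\wh{2k+1},2l+1}$. As a safeguard I would check the final formulas directly against (\ref{F2}) and the initial values in the small cases $k=l=1$ and $F_{\wh{2},2l}$. The part I expect to be genuinely delicate is getting the harmonic-difference sum to telescope cleanly into the single rational correction $-\tfrac{l}{2k(l+k)}$.
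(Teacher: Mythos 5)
Your plan is essentially the paper's own proof: the same telescoping expansion of $F_{\wh{2k},2l}$ driven by (\ref{F2}), the same split into a $d_k$-part (which reduces to the already-evaluated unhatted sum, giving $-\tfrac{lk(l+1)(k+1)}{k+l}d_kC_kC_l$) and a harmonic-weighted residual handled by reordering the double sum and applying the Pascal-telescoping identity, followed by substituting back into (\ref{F2}) to extract the odd--odd case. The one step you flag as delicate is precisely the computation the paper carries out explicitly --- the residual numerator collapses via Pascal's identity to $-\binom{k+l-1}{l-1}^2$, yielding the $-\tfrac{l}{2k(l+k)}$ correction --- so your sketch is on track and correct in approach.
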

\begin{proof}
Following the same procedure by using  the dHirota equation
(\ref{F2}) and (\ref{B1-1})-(\ref{B2-2}), we have
 \bean
F_{\wh{2k},2l}&=&-\sum_{i=0}^{l-1}\frac{(2l)!(2k)!C_{k+i}C_{l-i-1}}{(2l-2i-2)!(2k+2i)!}\left[(k+i+1)d_{k+i}-(l-i)
\left(d_{k+i}+\frac{1}{2(k+i+1)}\right)\right]\\
&=&(I)+(II)
 \eean
 where
 \bean
 (I)&=&-k!l!(k+1)!(l+1)!C_kC_l\sum_{i=0}^{l-1}\frac{(k-l+2i+1)d_k}{(l-i)!(l-i-1)!(k+i+1)!(k+i)!}\\
 (II)&=&-k!l!(k+1)!(l+1)!C_kC_l\sum_{i=1}^{l-1}\frac{(k-l+2i+1)(\frac{1}{k+i}+\cdots+\frac{1}{k+1})-
 \frac{(l-i)}{2(k+i+1)}}{(l-i)!(l-i-1)!(k+i+1)!(k+i)!}.
  \eean
Part (I)  can be computed as before and it gives
 \[
(I)=-\frac{lk(l+1)(k+1)}{k+l}d_kC_kC_l.
 \]
While part (II) can be written as follows
\[
-\frac{lk(l+1)(k+1)}{k+l}C_kC_l\left[\frac{2(k+l)\sum_{i=1}^{l-1}{k+l-1\choose i-1}[{k+l\choose i}-
{k+l\choose i-1}](\frac{1}{k+l-i}+\cdots+\frac{1}{k+1})-
\sum_{i=0}^{l-1}{k+l\choose i}^2}{2(k+l){k+l-1\choose l}{k+l-1\choose k}}\right]
\]
where the first summation of the numerator in the bracket can be
simplified as
 \bean
 &&\sum_{i=1}^{l-1}{k+l-1\choose i-1}\left[{k+l\choose i}-
{k+l\choose i-1}\right]\left(\frac{1}{k+l-i}+\cdots+\frac{1}{k+1}\right)\\
&&=\frac{1}{k+1}\sum_{i=1}^{l-1}{k+l-1\choose i-1}\left[{k+l\choose
i}-{k+l\choose i-1}\right]
+\frac{1}{k+2}\sum_{i=1}^{l-2}{k+l-1\choose i-1}\left[{k+l\choose i}-{k+l\choose i-1}\right]\\
&&+\cdots+\frac{1}{k+l-1}\sum_{i=1}^{1}{k+l-1\choose
i-1}\left[{k+l\choose i}-{k+l\choose i-1}\right]\\
&&=\frac{1}{k+1}{k+l-1\choose l-1}{k+l-1\choose
l-2}+\frac{1}{k+2}{k+l-1\choose l-2}{k+l-1\choose
l-3}\\
&&+\cdots+\frac{1}{k+l-1}{k+l-1\choose 1}{k+l-1\choose 0}\\
&&=\sum_{i=1}^{l-1}\frac{1}{k+l-i}{k+l-1\choose i-1}{k+l-1\choose
i}.
 \eean
Hence the numerator in the bracket is given by
 \bean
 &&2(k+l)\sum_{i=1}^{l-1}\frac{1}{k+l-i}{k+l-1\choose i-1}{k+l-1\choose i}-\sum_{i=0}^{l-1}{k+l\choose i}^2\\
 &&=\sum_{i=1}^{l-1}\left[2{k+l-1\choose i-1}{k+l\choose i}-{k+l\choose i}^2\right]-{k+l\choose 0}^2\\
  &&=\sum_{i=1}^{l-1}{k+l\choose i}\left[2{k+l-1\choose i-1}-{k+l-1\choose i}-{k+l-1\choose i-1}\right]+1\\
   &&=\sum_{i=1}^{l-1}\left[{k+l-1\choose i}+{k+l-1\choose i-1}\right]
   \left[{k+l-1\choose i-1}-{k+l-1\choose i}\right]+1\\
   &&=-{k+l-1\choose l-1}^2
 \eean
 which implies
 \[
F_{\wh{2k},2l}=-\frac{lk(l+1)(k+1)}{l+k}\left(d_k-\frac{l}{2k(l+k)}\right)C_kC_j.
 \]
 Substituting $F_{\wh{2k},2l}$ into (\ref{F2}) for $n=2k$ and $m=2l+1$, we obtain
 \[
F_{\wh{2k+1},2l+1}=-\frac{(2l+1)(2k+1)(l+1)(k+1)}{k+l+1}\left(d_k+\frac{1}{2(l+k+1)}\right)C_kC_l.
 \]
\end{proof}
\begin{corollary}
The two point functions $F_{\wh{m}n}$ for $mn\neq 0$ are
negative-defined, i.e. $F_{\wh{m}n}< 0$.
\end{corollary}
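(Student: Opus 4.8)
The plan is to obtain the corollary directly from the explicit closed forms for $F_{\wh{n}m}$ furnished by the preceding theorem, the only work being to organize the cases by the parities of the two indices and then to inspect signs. Under the hypothesis $mn\neq 0$ the theorem's vanishing clause leaves exactly two non-trivial families, namely $F_{\wh{2k+1},\,2l+1}$ with $k,l\geq 0$ and $F_{\wh{2k},\,2l}$ with $k,l\geq 1$; all the mixed-parity entries $F_{\wh{2k},2l+1}$ and $F_{\wh{2k+1},2l}$ in this range are identically zero, and the entries with a vanishing index — such as $F_{\wh{0}0}=0$ or $F_{\wh{2k+1},0}=0$ — are precisely what the assumption $mn\neq 0$ removes. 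So it suffices to establish strict negativity on those two families.

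For the odd--odd family I would simply read off
\[
F_{\wh{2k+1},\,2l+1}=-\,\frac{(2l+1)(2k+1)(l+1)(k+1)}{l+k+1}\left(d_k+\frac{1}{2(l+k+1)}\right)C_kC_l .
\]
Every factor in the rational prefactor is strictly positive (recall that $C_k,C_l$ are positive integers), and $d_k+\tfrac{1}{2(l+k+1)}>0$ since $d_k=\sum_{j=1}^k 1/j\geq 0$ while $\tfrac{1}{2(l+k+1)}>0$; hence $F_{\wh{2k+1},2l+1}<0$. For the even--even family, write
\[
F_{\wh{2k},\,2l}=-\,\frac{lk(l+1)(k+1)}{l+k}\left(d_k-\frac{l}{2k(l+k)}\right)C_kC_l ,\qquad k,l\geq 1 .
\]
Again the rational prefactor is strictly positive, so the whole claim reduces to the single inequality $d_k>\dfrac{l}{2k(l+k)}$.

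This last point is the only genuine (and very mild) obstacle. It follows from an elementary bound: for $k,l\geq 1$ one has $0<\tfrac{l}{l+k}<1$, so
\[
\frac{l}{2k(l+k)}=\frac{1}{2k}\cdot\frac{l}{l+k}<\frac{1}{2k}\leq\frac12\leq 1\leq d_k ,
\]
using $d_k\geq d_1=1$. Therefore $d_k-\tfrac{l}{2k(l+k)}>0$ and $F_{\wh{2k},2l}<0$. Combining the two families gives $F_{\wh{m}n}<0$ for every pair with $mn\neq 0$ at which $F_{\wh{m}n}$ does not vanish, which is the assertion of the corollary; as a consistency check one may note $F_{\wh 1\,1}=-\tfrac12$ and $F_{\wh 2\,2}=-\tfrac32$. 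Thus the proof is essentially bookkeeping — keeping straight which parity pattern produces which formula — together with the harmless estimate on the harmonic number $d_k$.
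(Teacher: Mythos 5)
Your proof is correct and follows essentially the same route as the paper: both reduce the corollary to inspecting the signs of the explicit formulas, with the only nontrivial point being $d_k-\frac{l}{2k(l+k)}>0$, which the paper settles via the exact rewriting $d_k-\frac{l}{2k(l+k)}=\bigl(d_k-\frac{1}{2k}\bigr)+\frac{1}{2(k+l)}$ while you use the equally valid crude bound $\frac{l}{2k(l+k)}<\frac{1}{2k}\leq 1\leq d_k$. Your explicit caveat that the mixed-parity entries vanish rather than being strictly negative is, if anything, a more careful reading of the statement than the paper's own.
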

\begin{proof}
The only case to be considered is $F_{\wh{2k},2l}$ in which
\[
d_k-\frac{l}{2k(l+k)}=\left(d_k-\frac{1}{2k}\right)+\frac{1}{2(k+l)}>0.
\]
\end{proof}
Finally, we compute the two point function $F_{\wh{n}\wh{m}}$.
\begin{theorem}
The two point function $F_{\wh{n}\wh{m}}$ for the extended dToda
hierarchy with $F_{00}=0$ and $F_{01}=0$ are given by
 \bean
F_{\wh{2k},\wh{0}}&=&-\frac{k+1}{2}d_kC_k,\quad k=1,2,\cdots\\
F_{\wh{2k+1},\wh{2l+1}}&=&\frac{(2l+1)(2k+1)(l+1)(k+1)}{l+k+1}\times\\
&&\left[\left(d_k+\frac{1}{2(l+k+1)}\right)\left(d_l+\frac{1}{2(l+k+1)}\right)+\frac{1}{4(k+l+1)^2}\right]C_kC_l,\quad k,l=0,1,2,\cdots\\
F_{\wh{2k},\wh{2l}}&=&\frac{lk(l+1)(k+1)}{l+k}\left[\left(d_k-\frac{l}{2k(l+k)}\right)\left(d_l-\frac{k}{2l(l+k)}\right)+
\frac{1}{4(k+l)^2}\right]C_kC_l,\quad k,l=1,2,\cdots\\
F_{\wh{n}\wh{m}}&=&0,\quad {\rm otherwise\/}.
 \eean
\end{theorem}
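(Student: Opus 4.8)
The plan is to run the same machine as in the two preceding theorems: substitute the known initial data into a dispersionless Hirota equation and telescope. The relevant equation is now (\ref{F3}); solving it for the lower index gives
\[
F_{\wh{2k},\wh{2l}}=2l\bigl(F_{\wh{2l-1},0}F_{\wh{2k},1}+F_{\wh{2l-1},1}F_{\wh{2k},0}\bigr)-\frac{2l}{2k+1}\,F_{\wh{2k+1},\wh{2l-1}},
\]
and iterating this exactly as in the derivation of $F_{\hat{2k},2l}$ above — peeling off the second hat index one step at a time, down to $\wh{1}$ — expresses $F_{\wh{2k},\wh{2l}}$ as a single sum of products $F_{\wh{2l-1-j},\bullet}\,F_{\wh{2k+j},\bullet}$ (all one-hat two-point functions, hence known from (\ref{B2-1})--(\ref{B2-2}) and the preceding theorem) plus a tail term proportional to $F_{\wh{2k+2l-1},\wh{1}}$. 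One further application of (\ref{F3}) with $n=2k$, $m=2l+1$ then gives $F_{\wh{2k+1},\wh{2l+1}}$ in terms of the even--even value $F_{\wh{2k},\wh{2l+2}}$ and known one-hat functions, and the parities of the surviving coefficient extractions force $F_{\wh n\wh m}=0$ in every remaining case, which is the ``otherwise'' clause. (The symmetry $F_{\wh a\wh b}=F_{\wh b\wh a}$ is used freely to dispose of boundary indices.)

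The one genuinely new ingredient is the purely logarithmic boundary datum $F_{\wh n\wh 0}$, equivalently the values $F_{\wh{2k+2l-1},\wh{1}}$ occurring as tails above. This \emph{cannot} be obtained from (\ref{F3}) alone: that equation propagates $F_{\hat n,\wh m}$ in both indices with no closing boundary, so exactly one datum per $n$ must be supplied from the Lax formalism — precisely the role played by $F_{n0}=(B_n)_{[0]}$ and $F_{\wh n0}=(\wh B_n)_{[0]}$ in the Proposition establishing (\ref{bF1})--(\ref{bF2}). The relevant flow is the $\hat t_0$-flow, for which $\wh B_0=(\log L)_{\geq0}$; since by the prescription (\ref{logL-exp}) one has $(\log L)_{[0]}=\tfrac12\log u_2=\tfrac12 F_{00}$, the same kind of coefficient extraction yields $F_{\wh n\wh 0}=\tfrac12(\wh B_n)_{[0]}=\tfrac12 F_{\wh n0}$ for the solution with $F_{00}=F_{01}=0$. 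With (\ref{B2-1}) this gives $F_{\wh{2k},\wh{0}}=-\tfrac{k+1}{2}d_kC_k$ and $F_{\wh{2k+1},\wh{0}}=0$, and a single pass of (\ref{F3}) with $m=0$ then produces all the $F_{\wh{N},\wh{1}}$ needed by the telescoping.

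The hard part, and where the real effort lies, is the bookkeeping. Because \emph{both} indices of the telescoped products are now hatted, each surviving product carries up to two harmonic-number factors, so the telescoped expression is a double sum with a $d_kd_l$-part, a $d_k$-part, a $d_l$-part and a constant part. The $d_kd_l$-part reproduces verbatim the evaluation of $F_{2k,2l}$ in the Kodama--Pierce theorem above and equals $\tfrac{lk(l+1)(k+1)}{l+k}C_kC_l$; the $d_k$- and $d_l$-parts reduce to the same sums already resummed in the derivation of $F_{\hat{2k},2l}$ (whose answer carries the single bracket $d_k-\tfrac{l}{2k(l+k)}$), applied once for each of the two hats; and the constant part is handled by the same Pascal-identity and Catalan-recursion manipulations used there, in particular the collapse $\sum_i{k+l-1\choose i-1}\bigl[{k+l\choose i}-{k+l\choose i-1}\bigr]={k+l-1\choose p-1}{k+l-1\choose p}$ together with $C_{k+p}=2^p\frac{(2k+2p-1)!!(k+1)!}{(2k-1)!!(k+p+1)!}C_k$. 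Assembling the four pieces gives the bracket $(d_k-\tfrac{l}{2k(l+k)})(d_l-\tfrac{k}{2l(l+k)})+\tfrac1{4(k+l)^2}$, and substituting $F_{\wh{2k},\wh{2l}}$ back into (\ref{F3}) delivers $F_{\wh{2k+1},\wh{2l+1}}$. No combinatorial identity beyond those already established is expected to be needed; the difficulty is entirely in keeping the several sub-sums organised.
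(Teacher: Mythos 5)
Your proposal follows essentially the same route as the paper: telescope the dHirota equation (\ref{F3}) against the boundary data (\ref{B2-1})--(\ref{B2-2}), evaluate the resulting sum with the same Pascal/Catalan identities already used for $F_{2k,2l}$ and $F_{\wh{2k},2l}$, and recover the odd--odd case by one further application of (\ref{F3}) with $n=2k$, $m=2l+1$. Your explicit justification of the boundary datum $F_{\wh{n}\wh{0}}=\tfrac{1}{2}F_{\wh{n}0}$ via $(\wh{B}_0)_{[0]}=\tfrac{1}{2}F_{00}$ (equivalently $\hat{t}_0=2t_0$) is a welcome addition that the paper leaves implicit, and it correctly reproduces the tail term $-\tfrac{(2k)!(2l)!}{2(2k+2l)!}(k+l+1)d_{k+l}C_{k+l}$ appearing in the paper's computation.
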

\begin{proof}
 Using  the dHirota equation (\ref{F3}) and taking into
account (\ref{B2-1})-(\ref{B2-2}) we have
 \bean
  F_{\wh{2k},\wh{2l}}&=&\sum_{i=0}^{l-1}\frac{(2l)!(2k)!C_{k+i}C_{l-i-1}}{(2l-2i-2)!(2k+2i)!}\times\\
&&\left[(k+i+1)d_{k+i}(d_{l-i-1}+\frac{1}{2(l-i)})-(d_{k+i}+\frac{1}{2(k+i+1)})(l-i)d_{l-i-1}\right]\\
&&-\frac{2l!2k!}{2(2k+2l)!}(k+l+1)d_{k+l}C_{k+l}\\
&=&\frac{l!k!(l+1)!(k+1)!C_kC_l}{2((k+l)!)^2}[(I)+(II)+(III)+(IV)+(V)]
 \eean
 where
 \[
(I)=2((k+l)!)^2\sum_{i=0}^{l-1}
 \frac{\left[(k-l+2i+1)(d_k+(\frac{1}{k+i}+\cdots+\frac{1}{k+1}))-\frac{(l-i)}{2(k+i+1)}\right]d_l}
{(l-i)!(l-i-1)!(k+i+1)!(k+i)!}
\]
\[
 (II)=2((k+l)!)^2\sum_{i=0}^{l-1}
 \frac{\left[-(k-l+2i+1)(\frac{1}{l-i}+\cdots+\frac{1}{l})+\frac{(k+i+1)}{2(l-i)}\right]d_k}
{(l-i)!(l-i-1)!(k+i+1)!(k+i)!}\\
 \]
 \[
 (III)=2((k+l)!)^2\sum_{i=0}^{l-1}
 \frac{\left[-(k-l+2i+1)(\frac{1}{k+i}+\cdots+\frac{1}{k+1})(\frac{1}{l-i}+\cdots+\frac{1}{l})\right]}
{(l-i)!(l-i-1)!(k+i+1)!(k+i)!}
 \]
 \[
(IV)=2((k+l)!)^2\sum_{i=0}^{l-1}
 \frac{\left[\frac{(k+i+1)}{2(l-i)}(\frac{1}{k+i}+\cdots+\frac{1}{k+1})+
 \frac{(l-i)}{2(k+i+1)}(\frac{1}{l-i}+\cdots+\frac{1}{l})\right]}
{(l-i)!(l-i-1)!(k+i+1)!(k+i)!}
 \]
 \[
(V)=-d_{k+l}.
 \]
    Each term can be calculated as follows:
 \bean
(I)&=&2(k+l){k+l-1\choose k}{k+l-1\choose l}\left[d_k-\frac{l}{2k(l+k)}\right]d_l\\
 (II)&=&d_k-2(k+l){k+l-1\choose k}{k+l-1\choose l}\left[\frac{d_kk}{2l(k+l)}\right]\\
 (III)&=&-\sum_{j=0}^{l-1}\frac{1}{j+1}\sum_{i=0}^j{k+l\choose i}^2
-\sum_{j=1}^{l-1}\frac{1}{k+l-j}\sum_{i=1}^{j}{k+l\choose i}^2\nonumber\\
&&+\frac{1}{k+l}{k+l-1\choose l-1}{k+l-1\choose l}+\sum_{i=0}^{l-1}\frac{1}{k+l-i}\\
(IV)&=&\sum_{j=1}^{l-1}\frac{1}{k+l-j}\sum_{i=1}^{j}{k+l\choose i}^2
+\sum_{j=0}^{l-1}\frac{1}{j+1}\sum_{i=0}^{j}{k+l\choose i}^2\\
 (V)&=&-d_k-\sum_{i=0}^{l-1}\frac{1}{k+l-i}.
  \eean
  It turns out that
  \[
  F_{\wh{2k},\wh{2l}}=\frac{lk(l+1)(k+1)}{l+k}\left[\left(d_k-\frac{l}{2k(l+k)}\right)
  \left(d_l-\frac{k}{2l(l+k)}\right)+\frac{1}{4(k+l)^2}\right]C_kC_l.
  \]
  Substituting $F_{\wh{2k},\wh{2l}}$ into (\ref{F3}) for $n=2k$ and $m=2l+1$, we obtain
 \bean
F_{\wh{2k+1},\wh{2l+1}}&=&\frac{(2l+1)(2k+1)(l+1)(k+1)}{l+k+1}\times\\
&&\left[\left(d_k+\frac{1}{2(l+k+1)}\right)\left(d_l+\frac{1}{2(l+k+1)}\right)+\frac{1}{4(k+l+1)^2}\right]C_kC_l.
 \eean
\end{proof}
\begin{corollary}
The two point functions $F_{\wh{m}\wh{n}}$ for $mn\neq
0$ are positive-defined, i.e. $F_{\wh{m}\wh{n}}\geq 0$.
\end{corollary}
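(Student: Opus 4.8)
The plan is to read the claim off directly from the four closed-form families for $F_{\wh{n}\wh{m}}$ supplied by the preceding theorem; once those expressions are in hand, positivity is purely a matter of checking signs. The family $F_{\wh{n}\wh{m}}=0$ is nonnegative for trivial reasons, and the only genuinely negative expression among them, $F_{\wh{2k},\wh{0}}=-\frac{k+1}{2}d_kC_k$, carries the index $\wh{0}$, i.e. has $mn=0$, so it falls outside the range $mn\neq 0$ of the statement (this is precisely why the hypothesis $mn\neq 0$ is imposed). Hence only the two families $F_{\wh{2k+1},\wh{2l+1}}$ and $F_{\wh{2k},\wh{2l}}$ remain to be examined.

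For $F_{\wh{2k+1},\wh{2l+1}}$ with $k,l\geq 0$, every factor standing in front of the bracket, namely $(2l+1)$, $(2k+1)$, $(l+1)$, $(k+1)$, $1/(l+k+1)$, $C_k$ and $C_l$, is strictly positive, so the sign is governed entirely by the bracket $\bigl(d_k+\frac{1}{2(l+k+1)}\bigr)\bigl(d_l+\frac{1}{2(l+k+1)}\bigr)+\frac{1}{4(k+l+1)^2}$. Since $d_j=\sum_{i=1}^{j}1/i\geq 0$ with $d_0=0$, each of the two factors in the product is strictly positive, and the extra summand $1/(4(k+l+1)^2)$ is positive as well, so the bracket is positive and $F_{\wh{2k+1},\wh{2l+1}}>0$.

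For $F_{\wh{2k},\wh{2l}}$ with $k,l\geq 1$ the prefactor $\frac{lk(l+1)(k+1)}{l+k}C_kC_l$ is again positive, so the question reduces to the sign of $\bigl(d_k-\frac{l}{2k(l+k)}\bigr)\bigl(d_l-\frac{k}{2l(l+k)}\bigr)+\frac{1}{4(k+l)^2}$. Here I would reuse the splitting already employed in the proof of the corollary on $F_{\wh{m}n}$, namely $d_k-\frac{l}{2k(l+k)}=\bigl(d_k-\frac{1}{2k}\bigr)+\frac{1}{2(k+l)}$, together with the elementary bound $d_k\geq 1>\frac{1}{2k}$ valid for $k\geq 1$; this makes the first factor strictly positive, and by the $k\leftrightarrow l$ symmetry so is the second. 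The bracket is then a product of two positive numbers plus a positive number, hence positive, so $F_{\wh{2k},\wh{2l}}>0$. Combining the two families with the vanishing remaining case yields $F_{\wh{m}\wh{n}}\geq 0$ whenever $mn\neq 0$. I do not expect any real obstacle here; the only point needing a moment's care is the positivity of the mixed term $d_k-\frac{l}{2k(l+k)}$, and that is handled by the same decomposition borrowed from the earlier corollary.
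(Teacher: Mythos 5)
Your proof is correct. The paper actually states this corollary without giving any proof, and your argument supplies exactly the intended one: the sign check on the two nonvanishing families, with the only delicate point --- the positivity of $d_k-\frac{l}{2k(l+k)}$ for $k,l\geq 1$ --- handled by the same decomposition $d_k-\frac{l}{2k(l+k)}=\bigl(d_k-\frac{1}{2k}\bigr)+\frac{1}{2(k+l)}$ that the paper itself uses in the proof of the preceding corollary on $F_{\wh{m}n}$.
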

\section{Back to the topological $CP^1$ model}
The relationship between integrable systems and topological field theories has dramatic advances
in the past two decades(see e.g. \cite{AK96,Dij93,Dub96,Kon92,Kri92,W90,W91}).
For the extended dToda hierarchy the corresponding topological field is described by
two primary fields (or observables) $\{\mathcal{O}_1=1\in H^0(CP^1), \mathcal{O}_2=\om\in H^2(CP^1)\}$
with coupling parameters $T^{\alpha,0}, \alpha=1,2$.
When the theory couples to topological gravity, a set of new variables emerge as gravitational descendants
$\{\sigma_n(\mathcal{O}_\alpha)\}$ with new coupling constants $\{T^{\alpha,n}\}$. The identity operator
 now becomes the puncture operator $\mathcal{O}_1=P$ and we also denote $\mathcal{O}_2=Q$.
The space spanned by $\{T^{\alpha,n}, n=0,1,2,\cdots\}$ is called the full phase space and the subspace
parametrized by $T^{\alpha,0}$ the small phase space.
The generating function of correlation function is the full free energy defined by
\[
\mathcal{F}(T)=\sum_{g=0}\mathcal{F}_g=
\sum_{g=0}^\infty\langle e^{\sum_{\alpha,n}T^{\alpha,n}\sigma_n(\mathcal{O}_\alpha)}\rangle_g.
\]
Since the free energy $F(t_0,t,\hat{t})$ of the extended dToda hierarchy corresponds to the genu-zero
generating function $\mathcal{F}_0$ of $CP^1$ under the identification
\[
t_{n+1}=\frac{T^{2,n}}{(n+1)!},\quad
\hat{t}_n=\frac{2T^{1,n}}{n!},\quad n\ge 0.
 \]
 where $\hat{t}_0=2t_0=2T^{1,0}=2x$. Hence a generic genus-zero $m$-point correlation
 function can be calculated as follows
\[
\langle\sigma_{n_1}(\mathcal{O}_{\alpha_1})\sigma_{n_2}(\mathcal{O}_{\alpha_2})
\cdots\sigma_{n_m}(\mathcal{O}_{\alpha_m})\rangle
=\frac{\pa^mF}{\pa T^{\alpha_1,n_1}\pa T^{\alpha_2,n_2}\cdots\pa T^{\alpha_m,n_m} }.
\]
In particular, the metric  on the space of primary fields is defined by
three-point correlation function $\eta_{\alpha\beta}=\langle P\mathcal{O}_\alpha\mathcal{O}_\beta\rangle$ with
$\eta_{11}=\eta_{22}=0$ and $\eta_{12}=\eta_{21}=1$, and hence
$\mathcal{O}_1=\mathcal{O}^2$ and $\mathcal{O}_2=\mathcal{O}^1$.

The Lax equations of the extended dToda hierarchy can be written as
 \[
\frac{\pa L}{\pa T^{\alpha,n}}=\{B_{\alpha,n},L\},\quad \alpha=1,2; n=0,1,2,\cdots
 \]
 where
 \[
B_{1,n}=\frac{2}{n!}(L^n(\log L-d_n))_{\ge 0},\quad
B_{2,n}=\frac{1}{(n+1)!}(L^{n+1})_{\ge 0}
 \]
and the string equation (\ref{string-eq}) becomes
 \[
  0=1+\sum_{n=1}^{\infty}T^{2,n}\frac{\pa L}{\pa
T^{2,n-1}} + \sum_{n=1}^{\infty}T^{1,n}\frac{\pa L}{\pa T^{1,n-1}}.
 \]
 Shifting the variable $T^{1,1}\to T^{1,1}-1$ we have
  \[
\frac{\pa L}{\pa T^{1,0}}=1+\sum_{n=1}^{\infty}T^{2,n}\frac{\pa
L}{\pa T^{2,n-1}} + \sum_{n=1}^{\infty}T^{1,n}\frac{\pa L}{\pa
T^{1,n-1}}
 \]
 which, after extracting the $p^0$ term, yields
 \bean
t^1(T)&=&T^{1,0}+\sum_\alpha\sum_{n=1}^{\infty}T^{\alpha,n}
\langle\sigma_{n-1}(\mathcal{O}_\alpha)Q\rangle,\\
t^2(T)&=&T^{2,0}+\sum_\alpha\sum_{n=1}^{\infty}T^{\alpha,n}
\langle\sigma_{n-1}(\mathcal{O}_\alpha)P\rangle
 \eean
  where we identify the flat coordinate $t^\alpha=\langle P\mathcal{O}^\alpha\rangle$ as
 \[
  t^1=u_1=\langle PQ\rangle,\quad t^2=\log u_2=\langle PP\rangle.
   \]
Therefore, in small space $t^1=T^{1,0}$ and $ t^2=T^{2,0}$. The condition $F_{01}=F_{00}=0$ then corresponds to
 $T^{\alpha,n}=0,\; \forall \alpha, n$.
In the Landau-Ginzburg formulation of the topological $CP^1$ model, it can be shown \cite{W90} that
 the following genus-zero topological recursion relation holds.
 \be
 \langle
\sigma_n(\mathcal{O}_\alpha)XY\rangle =\sum_\beta
\langle\sigma_{n-1}(\mathcal{O}_\alpha)\mathcal{O}^\beta\rangle
\langle\mathcal{O}_\beta XY\rangle.
 \label{TRR}
 \ee
 \begin{proposition}
The genus-zero topological recursion relation (\ref{TRR}) implies the dHirota equations (\ref{F1})-(\ref{F3}).
 \end{proposition}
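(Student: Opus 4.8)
The plan is to reproduce, in the language of the $CP^1$ correlation functions, the computation underlying the proof of Theorem~\ref{F123}, and to observe that its only non-routine step is a single application of~(\ref{TRR}). As in that proof, I would deduce each of (\ref{F1})--(\ref{F3}) from the relation obtained by differentiating it once in $t_0$, so that only those differentiated relations have to be checked directly.

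The first task is to fix the dictionary carefully. From $t_{n+1}=T^{2,n}/(n+1)!$, $\hat t_n=2T^{1,n}/n!$ and $t_0=T^{1,0}$ one has $\pa_{t_{n+1}}=(n+1)!\,\pa_{T^{2,n}}$, $\pa_{\hat t_n}=\tfrac{n!}{2}\pa_{T^{1,n}}$, $\pa_{t_0}=\pa_{T^{1,0}}$, so differentiating $F$ in $t_{n+1}$, in $\hat t_n$, or in $t_0$ inserts $(n+1)!\,\sigma_n(Q)$, $\tfrac{n!}{2}\sigma_n(P)$, or the puncture $P$ into a genus-zero correlator. This turns the building blocks~(\ref{bF1})--(\ref{bF2}) into $F_{n,1}=n!\langle\sigma_{n-1}(Q)Q\rangle$, $F_{n,0}=n!\langle\sigma_{n-1}(Q)P\rangle$, $F_{\wh n,1}=\tfrac{n!}{2}\langle\sigma_n(P)Q\rangle$, $F_{\wh n,0}=\tfrac{n!}{2}\langle\sigma_n(P)P\rangle$, turns the mixed second derivatives $F_{m,n+1}$, $F_{\wh{n+1},m}$, $F_{\wh{n+1},\wh m}$ into two-point descendant correlators, and turns third derivatives such as $F_{m,0,0}$, $F_{m,1,0}$ into $m!\langle P\sigma_{m-1}(Q)P\rangle$, $m!\langle Q\sigma_{m-1}(Q)P\rangle$ --- all up to explicit factorials and powers of~$2$. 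Keeping track of these numerical prefactors (the factor $2$ carried by $\hat t_n$, and the identification $\hat t_0=2t_0$, in particular) is the one genuinely fussy ingredient.

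With the dictionary in hand, each of (\ref{F1})--(\ref{F3}) is obtained, exactly as in the proof of Theorem~\ref{F123}, by dividing an identity of the type $F_{m,n+1,0}=(n+1)(F_{n,0}F_{m,1,0}+F_{m,0,0}F_{n,1})$ (the $t_0$-derivative form; with the evident hatted analogues for (\ref{F2})--(\ref{F3})) by $n+1$, adding its $n\leftrightarrow m$ companion, and integrating in $t_0$. Translating the left-hand side of that identity through the dictionary, it reads --- up to factorials and powers of~$2$ --- $\langle\sigma_a(\mathcal{O}_\mu)\sigma_b(\mathcal{O}_\nu)P\rangle$ with $\mathcal{O}_\mu,\mathcal{O}_\nu\in\{P,Q\}$, $a\ge1$, $b\ge0$ (e.g.\ $\langle\sigma_n(Q)\sigma_{m-1}(Q)P\rangle$ for (\ref{F1})). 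But that is precisely~(\ref{TRR}) with $X=\sigma_b(\mathcal{O}_\nu)$, $Y=P$ applied to the descendant $\sigma_a(\mathcal{O}_\mu)$: since $\mathcal{O}^1=Q$ and $\mathcal{O}^2=P$, the contraction $\sum_\beta\langle\sigma_{a-1}(\mathcal{O}_\mu)\mathcal{O}^\beta\rangle\langle\mathcal{O}_\beta\,\sigma_b(\mathcal{O}_\nu)P\rangle$ splits as $\langle\sigma_{a-1}(\mathcal{O}_\mu)Q\rangle\langle P\sigma_b(\mathcal{O}_\nu)P\rangle+\langle\sigma_{a-1}(\mathcal{O}_\mu)P\rangle\langle Q\sigma_b(\mathcal{O}_\nu)P\rangle$, whose four factors are, by the dictionary again, exactly the $F_{\bullet,1}$, $F_{\bullet,0}$, $F_{\bullet,0,0}$, $F_{\bullet,1,0}$ occurring on the right-hand side. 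So the entire content of the Proposition is the observation that the $\eta$-contraction in~(\ref{TRR}) reproduces the ``$(B_n)_{[0]},(B_n)_{[1]}$'' split that drives the proof of Theorem~\ref{F123}.

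Two points would need a remark. First, the low-index cases ($n=0$ in (\ref{F2})--(\ref{F3}), and small $m$ throughout) where some $\sigma_k(\mathcal{O}_\alpha)$ collapses to a primary $P$ or $Q$: there the reduction $\langle P\mathcal{O}_\alpha\mathcal{O}_\beta\rangle=\eta_{\alpha\beta}$ that one needs is merely the $k=0$ instance of the translations above, so nothing new enters. Second --- and this is the only step beyond ``apply~(\ref{TRR})'' --- one must show that the $t_0$-independent function left over after integrating in $t_0$ vanishes; this is the same point that is left implicit in the proof of Theorem~\ref{F123}, and I would settle it by checking (\ref{F1})--(\ref{F3}) directly, via the closed forms of Proposition~\ref{Fn01}, on the small phase space where all $T^{\alpha,n}=0$ for $n\ge1$. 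I expect that verification --- not the use of~(\ref{TRR}) itself --- to be the only real obstacle.
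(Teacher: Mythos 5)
Your argument is essentially the paper's own proof: differentiate the claimed identity in $T^{1,0}$ so that both sides become three-point correlators, apply (\ref{TRR}) with $Y=P$ to recognize the left side as the $T^{1,0}$-derivative of the right side, integrate back, and translate through the dictionary $F_{mn}/(m!n!)=\langle\sigma_{m-1}(Q)\sigma_{n-1}(Q)\rangle$ (and its hatted analogues with the factors of $2$). You additionally flag the integration-constant issue that the paper leaves implicit, which is a fair observation, though your proposed fix via the small phase space would need more care since Proposition~\ref{Fn01} only supplies the two-point functions with one index equal to $0$ or $1$.
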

\begin{proof}
Using (\ref{TRR}) we have
\bean
&&\frac{\pa}{\pa T^{1,0}}[\langle\sig_{n+1}(Q)\sig_m(Q)\rangle+\langle\sig_{n}(Q)\sig_{m+1}(Q)\rangle]\\
&&=\langle\sig_{n+1}(Q)\sig_m(Q)P\rangle+\langle\sig_{n}(Q)\sig_{m+1}(Q)P\rangle\\
&&=\frac{\pa}{\pa T^{1,0}} [\langle\sig_{n}(Q)Q\rangle\langle \sig_m(Q)P\rangle+\langle\sig_n(Q)P\rangle\langle\sig_m(Q)Q\rangle]
\eean
which, after integrating over $T^{1,0}$, implies
\[
\langle\sig_{n+1}(Q)\sig_m(Q)\rangle+\langle\sig_{n}(Q)\sig_{m+1}(Q)\rangle=
 \langle\sig_{n}(Q)Q\rangle\langle \sig_m(Q)P\rangle+\langle\sig_n(Q)P\rangle\langle\sig_m(Q)Q\rangle.
\]
Similarly, we have
\bean
&&\langle\sig_{n+1}(P)\sig_m(Q)\rangle+\langle\sig_{n}(P)\sig_{m+1}(Q)\rangle=
 \langle\sig_{n}(P)Q\rangle\langle \sig_m(Q)P\rangle+\langle\sig_n(P)P\rangle\langle\sig_m(Q)Q\rangle,\\
&&\langle\sig_{n+1}(P)\sig_m(P)\rangle+\langle\sig_{n}(P)\sig_{m+1}(P)\rangle=
\langle\sig_{n}(P)Q\rangle\langle\sig_m(P)P\rangle+\langle\sig_n(P)P\rangle\langle\sig_m(P)Q\rangle.
 \eean
The proof is completed by noting the following identifications:
 \bean
 \langle\sig_{m}(P)\sig_{n}(P)\rangle&=&\frac{4F_{\hat{m}\hat{n}}}{m!n!},\quad m,n\geq 0\\
\langle\sig_{m}(P)\sig_{n-1}(Q)\rangle&=&\frac{2F_{\hat{m}n}}{m!n!},\quad m\geq 0, n\geq 1\\
\langle\sig_{m-1}(Q)\sig_{n-1}(Q)\rangle&=&\frac{F_{mn}}{m!n!},\quad m,n\geq 1.
 \eean
\end{proof}
We thus show that the integrable structure associated with the genus-zero topological $CP^1$ model
is the extended dToda hierarchy.
Furthermore, integrating the two point functions
$\langle\sigma_n(P)P\rangle$ and $\langle\sigma_n(Q)P\rangle$ over
$T^{1,0}$ we obtain the one-point functions
\[
\langle\sigma_n(P)\rangle=\frac{2}{(n+1)!}F_{\wh{n+1},0},\quad
\langle\sigma_n(Q)\rangle=\frac{1}{(n+2)!}F_{n+2,0}.
\]
In particular, their values in the limit of zero couplings
($T^{\alpha,n}=0$ $\forall \alpha,n$) are
\[
\langle\sigma_{2k-1}(P)\rangle=-\frac{2d_k}{(k!)^2},\quad
\langle\sigma_{2k-2}(Q)\rangle=\frac{1}{(k!)^2}.
\]
\section{Concluding remarks}

We have introduced the extended dToda hierarchy from the one-dimensional dToda hierarchy
by  adding logarithmic flows. The full hierarchy equations of the
extended dToda system can be summarized by a set of dHirota equations which involve second derivatives of the free energy
$F$ in time parameters $t_0$, $t_n$ and $\hat{t}_n$. Based on these dHirota equations we computed the
two point functions $F_{n,m}$, $F_{\hat{n},m}$, and $F_{\hat{n},\hat{m}}$ in the case with $F_{00}=F_{01}=0$.
Our results extend the previous formula obtained by Kodama and Pierce for the one-dimensional dToda system to
those results for the extended dToda system. Furthermore, we have shown that, in terms of $CP^1$ time parameters, the
dHirota equations are nothing but a direct consequence of the genus-zero topological recursion relations.
This provides another route to realize that the integrable structure associated with the topological $CP^1$ model at
genus-zero level is the extended dToda hierarchy.

There are two remarks in order.
First, Milanov \cite{Mil07} has studied the Hirota quadratic equations associated with the
extended Toda hierarchy  by constructing some vertex operators taking values
in the algebra of differential operators on the affine line. The peculiar properties of these
Hirota equations have been studied in some recent works \cite{LHWC10,T10}.
It would be interesting to investigate the dispersionless limit of the Hirota quadratic equations.
Second, in \cite{KP09} a combinatorial meaning of the two point functions $F_{nm}$ has been investigated
from large-$N$ expansion of unitary ensemble of random matrices. It is quite natural to ask how
to realize the geometric/topological meaning of the rational numbers $F_{\hat{n},m}$ and $F_{\hat{n},\hat{m}}$
from the $CP^1$ matrix integral\cite{EY94} which contains extra logarithmic terms.
We hope to back to all these issues in our future works.

{\bf Acknowledgments\/}\\
We like to thank H.F.Shen for useful discussions.
This work is partially supported by the National Science Council of
Taiwan under Grant No. NSC99-2115-M-167-001(NCL) and NSC100-2112-M-194-002-MY3(MHT).


\end{document}